\pgfplotsset{width=10cm,compat=1.9}
\newtheorem{theorem}{Theorem}
\newtheorem{algorithm}{Algorithm}
\newtheorem{corollary}{Corollary}
\newtheorem{definition}{Definition}
\newtheorem{lemma}{Lemma}
\newtheorem{remark}{Remark}
\newcommand{\cumval}{V}
\newcommand{\val}{v}
\newcommand{\util}{u}
\newcommand{\cumutil}{U}
\newcommand{\icumutil}{\overline{\cumutil}}
\newcommand{\icumval}{\overline{\cumval}}
\newcommand{\iutil}{\overline{\util}}
\DeclareMathOperator{\OPT}{OPT}
\DeclareMathOperator{\DA}{DA}
\DeclareMathOperator{\EAR}{EAR}
\newcommand{\OPTCS}{\OPT^{\cumutil}}
\newcommand{\OPTOM}{\OPT^{\cumval}}
\newcommand{\OPTSEQ}[1]{\OPT^{SEQ}}
\newcommand{\DACS}{\DA^{\cumutil}}
\newcommand{\DAOM}{\DA^{\cumval}}
\newcommand{\bridgemechanism}[1]{\hat{y}}
\newcommand{\bridgeallocation}[1]{\hat{y}_{#1}}
\newcommand{\simplex}{\Delta}
\DeclareMathOperator{\GSP}{SP}
\newcommand{\epsGSP}[1]{\GSP(#1, \epsilon)}
\newcommand{\apxda}{\gamma}
\newcommand{\apxcs}{\zeta}
\newcommand{\apxom}{\pi}
\newcommand{\apxrough}{\alpha}
\newcommand{\feasibility}{\mathcal{P}}
\newcommand{\dcfeasibility}{\overline{\feasibility}}
\begin{document}

\title{Combinatorial Pen Testing (or Consumer Surplus of Deferred-Acceptance Auctions)}
\author[1]{Aadityan Ganesh}
\author[2]{Jason Hartline}
\affil[1]{Princeton University, aadityanganesh@princeton.edu}
\affil[2]{Northwestern University, hartline@northwestern.edu}

\date{}

\maketitle

\begin{abstract}
Pen testing is the problem of selecting high-capacity resources when
the only way to measure the capacity of a resource expends its
capacity. We have a set of $n$ pens with unknown amounts of ink and our goal is to select a feasible subset of pens maximizing the total ink in them. We are allowed to learn about the ink levels by writing with them, but this uses up ink that was previously in the pens. 

We identify optimal and near optimal pen testing algorithms by drawing analogues to auction theoretic frameworks of deferred-acceptance auctions and virtual values. Our framework allows the conversion of any near optimal deferred-acceptance mechanism into a near optimal pen testing algorithm.
Moreover, these algorithms guarantee an additional overhead of at most $(1+o(1)) \ln n$ in the approximation factor to the omniscient algorithm that has access to the ink levels in the pens.
We use this framework to give pen testing algorithms for various combinatorial constraints like matroid, knapsack, and general downward-closed constraints, and also for online environments.
\end{abstract}

\section{Introduction}

Pen testing is the problem of selecting high capacity resources when
the only way to measure the capacity of the resource expends its
capacity \citep{QV23}.  We show that any ascending-price auction can be converted into an equivalent pen testing algorithm. We apply the auction theoretic
framework of virtual values to identify optimal pen testing
algorithms.  This connection allows many existing results from
auction theory to be applied to pen testing and gives optimal and near
optimal pen testing algorithms in combinatorial and online environments.

The pen testing problem of \citet{QV23} is the following. We have a set of pens with varying amounts of remaining ink and we want to choose a pen with the largest amount of ink left.
We have access to the distribution of ink levels in these pens, but we are only allowed to learn about the ink levels by writing with the pens to test them.
While writing gives information about whether there is still ink left, it uses up ink that was previously in the pen. Pens that are expended due to testing can be discarded without incurring any penalty.
%We compare the performance of our algorithms against (a) the optimal pen testing algorithm and (b) the optimal algorithm that knows the amount of ink in each pen, which we call the \emph{omniscient benchmark}.
The combinatorial pen testing problem generalizes the pen testing problem to selecting a feasible subset of pens, e.g., according to a matroid or knapsack constraint, to maximize the total remaining ink in the chosen subset. 
%\Anote{To motivate the combinatorial generalization, consider the knapsack pen testing problem. Pens come in different shapes and sizes. We want to maximize the total residual ink in the chosen pens while being able to fit them inside our pen pouch.}.

Deferred-acceptance auctions \citep{MS14} formalize the notion of ascending-price mechanisms for allocating goods to strategic agents.
These are mechanisms that greedily reject the least promising agent by increasing the price for getting allocated at each stage. For instance, to auction off one good, the deferred-acceptance mechanism keeps increasing the price, rejecting agents whenever their value for the good falls below the price until exactly one agent remains in the auction. The \emph{surplus} of the mechanism is the sum of the values of all the allocated agents and the \emph{consumer surplus} is the surplus minus the payments made by the agents.

We provide a black box reduction to convert any surplus optimal (or near optimal) deferred-acceptance auction into an optimal (or near optimal) pen testing algorithm with equivalent performance guarantees.
In this reduction, pens and their ink levels are analogous to agents and their values.
Payments correspond to the ink used up via testing, and as a consequence, the consumer surplus of the deferred-acceptance auction corresponds to the residual ink in the pen testing environment.
Therefore, optimizing residual ink reduces to designing consumer-surplus-optimal deferred-acceptance mechanisms.

\citet{QV23} propose comparing the performance of a given pen testing algorithm against the \emph{omniscient benchmark}, the performance of an omniscient algorithm that has access to the ink levels in the pens.
The omniscient benchmark can thereby choose the optimal feasible subset of pens without having to burn any ink (i.e, payments in the auction environment), and is thus, analogous to the surplus optimal (sealed-bid) mechanism in the corresponding auction environment.
Thus, the \emph{omniscient approximation}, the ratio between the omniscient benchmark to the residual ink in the pen testing algorithm, is analogous to the ratio between the optimal surplus and the consumer surplus of the corresponding deferred-acceptance auction.

%Our reduction from pen testing to deferred-acceptance auction design allow us to draw a natural correspondence between optimal mechanisms and optimal pen testing and omniscient algorithms.
%Comparing against the optimal pen testing algorithm is analogous to upper bounding the ratio between the consumer surplus optimal deferred-acceptance mechanism and the consumer surplus of the underlying deferred-acceptance auction.

In the literature, near optimal deferred-acceptance mechanisms are evaluated against optimal (sealed bid) mechanisms, an upper bound to the optimal deferred-acceptance mechanism.
Define the \emph{standard benchmark} as the optimal (sealed-bid) mechanism for consumer surplus and let the \emph{standard approximation} $\apxda(n)$ be the ratio between the standard benchmark and the consumer surplus of a given deferred-acceptance mechanism.
Note that the standard benchmark does not have a natural pen testing analog, but is a much tighter upper bound on the optimal pen testing algorithm than the omniscient benchmark.
Thus, the reduction to deferred-acceptance mechanism design enables a much tighter approximation bound for the optimal pen testing algorithm.

The reduction framework decomposes into two steps for designing near optimal deferred-acceptance mechanisms.
One of the two steps involves proving upper bounds on the standard approximation $\apxda(n)$.
In the other step, for any given feasibility constraint, we consider the \emph{optimal omniscient approximation}, the ratio between the omniscient benchmark (here, the optimal surplus) and the standard benchmark (here, the optimal consumer surplus).
As we will argue in \Cref{thm:IntroSurplusDA}, the omniscient approximation $\apxom(n)$ is at most the product $\apxda(n) \, \apxcs(n)$ between the two approximation ratios.
At a high level, $\apxda(n)$ captures the loss in surplus due to running a deferred-acceptance auction and $\apxcs(n)$ captures the loss in surplus due to burning all payments.
Beyond consumer surplus optimization, the reduction framework applies more generally to optimization problems with two natural benchmarks, a weaker standard benchmark and much more powerful omniscient benchmark.

\begin{theorem} \label{thm:IntroSurplusDA}
Consider a combinatorial pen testing environment with $n$ pens, the analogous auction environment with an optimal omniscient approximation $\apxcs(n)$, and a deferred-acceptance mechanism with a standard approximation equal to $\apxda(n)$.
Then, there is a pen testing algorithm that is a $\apxda(n)$-approximation to the standard benchmark and $\apxom(n) = \apxda(n) \, \apxcs(n)$-approximation to the omniscient benchmark.\end{theorem}

\begin{comment}
\begin{theorem} \label{thm:IntroSurplusDA}
    Consider a combinatorial pen testing environment with $n$ pens, the analogous auction environment, and a deferred acceptance mechanism $\DA$ for the auction environment.  Let $\apxda(n)$ denote the {\em standard approximation} of $\DA$, i.e., the worst-case ratio of the expected optimal surplus to the expected surplus of $\DA$ for the auction environment.  Let $\apxcs(n)$ denote the {\em optimal omniscient approximation}, i.e., the worst-case ratio of the omniscient benchmark to the expected optimal consumer surplus for the auction environment. Then, there is a pen testing algorithm that is a $\apxda(n)$-approximation to the standard benchmark and $\apxom(n) = \apxda(n) \, \apxcs(n)$-approximation to the omniscient benchmark.
\end{theorem}    

\Cref{thm:IntroSurplusDA} gives a reduction framework for designing near optimal pen testing algorithms under arbitrary feasibility constraints. This approach improves bounds from previous literature for all the environments that we discuss; however, it does not always yield a tight bound. For the online IID environment (described below), we identify an algorithm with omniscient approximation $\apxom(n)$ that is a constant factor less than $\apxda(n) \, \apxcs(n)$.

\end{comment}

In the first step of our framework, we bound the optimal omniscient approximation $\apxcs(n)$ under various feasibility constraints.
Generally, remember that the optimal omniscient approximation $\apxcs(n)$ is the ratio between the omniscient benchmark and the standard benchmark, and for our specific application, it corresponds to the loss in surplus due to payments getting burnt instead of getting transferred to the auctioneer.
We show that $\apxcs(n) \leq (1 + o(1)) \ln n$ in any combinatorial environment.
Our result tightens the best known upper bound on $\apxcs(n)$ for most settings of interest.
\citet{FTT16} show that the optimal omniscient approximation is at most $\tfrac{2}{\ln 2} \, (1 + o(1)) \ln m$, where $m$ is the number of feasible outcomes.
Our bound improves their bound by removing the $\tfrac{2}{\ln 2}$ constant and by replacing $m$ by $n$, which is usually smaller.\footnote{In some non-downward-closed settings like public projects, $m < n$. For example, $m=2$ for building a bridge that serves everyone. Either the bridge is built or not. However, $m \geq n$ in all downward-closed environments.
Indeed, for each bidder, we assume there exists an outcome where they are included; otherwise, the bidder can be dropped from consideration altogether.} \citet{HR08} show that $\apxcs(n) \leq \frac{2}{\ln 2} (1+o(1)) \ln \frac{n}{k}$ when any $k$ out of the $n$ agents can be allocated the good simultaneously, i.e, any $k$ pens can be chosen in the corresponding pen testing environment.
We improve their bound by a multiplicative factor of at least $1.36 (1 - o(1))$. We also prove that $\apxcs(n) \geq (0.577 + \ln \frac{n}{k})$ for the $k$-identical goods setting, which we show is tight up to $(1 + o(1))$ terms when $k = o(n)$.

Our upper bound for $\apxcs(n)$ is via a reduction from multi-agent environments to a single-agent environment, similar in flavour to \citet{FHL20}.
In a single-agent environment where the agent is allocated with an ex-ante allocation probability $q$,
we show that the ratio of the optimal surplus to the optimal consumer surplus is at most $1 - \ln q$.
Our reduction allows us to convert the $(1 - \ln q)$-approximation in the single-agent setting to a $(1 + o(1) \ln n$-approximation for any multi-agent environment.

We reduce designing multi-agent approximation mechanisms to single-agent approximations with ex-ante constraints $q$ bounded away from 0.
The approach is inspired by \citet{FHL20}, who prove a similar reduction when the single-agent approximation is bounded above by a constant for all $q \in [0, 1]$.
Their approach cannot be directly applied to the optimal omniscient approximation of consumer surplus because the single-agent approximation $(1 - \ln q)$ approaches $\infty$ as $q \xrightarrow{} 0$.
We adapt ideas from \citet{HT19} to prove a reduction to the single-agent setting with an ex-ante constraint $q \in [1/n, 1]$.
In this range, the single-agent approximation ratio is at most $(1 + \ln n)$.
The resulting construction establishes an optimal omniscient approximation of $(1 + o(1)) \ln n$ for multi-agent environments with arbitrary feasibility constraints.

The second step of the reduction framework from pen testing to deferred-acceptance auctions upper bounds the standard approximation $\apxda(n)$, the ratio between the optimal consumer surplus of a sealed-bid mechanism and the consumer surplus of a given deferred-acceptance auction.
Through Myerson's virtual value framework \citep{Myer81}, we argue that the above is equivalent to finding an upper bound for the optimal expected surplus divided by the expected surplus of some deferred-acceptance auction.
Surplus optimal deferred-acceptance auctions are known for $k$-identical goods and more general matroid feasibility constraints \citep{MS14, BdVSV11}. We give a $2$-approximate deferred-acceptance mechanism for knapsack constraints. Beyond matroids and knapsacks, \citet{FGGS22} give a logarithmic approximate deferred-acceptance auction for any downward-closed feasibility constraint. We observe an inherent downward-closure in the pen testing problem that allows the auction of \citet{FGGS22} to be extended to all combinatorial constraints. The set of pens chosen by any pen testing algorithm can be padded by adding (possibly expended) pens to always output a maximal feasible set.

Online posted-price mechanisms are a special case of deferred-acceptance auctions. Online environments are ones where an irrevocable decision regarding choosing a pen needs to be taken before testing the next one. We consider the settings where the order of testing pens is adversarially chosen (the oblivious version, \textit{cf.} prophet inequalities, \citealp{CHMS10}) and where the algorithm can choose the order (the sequential version, \textit{cf.} correlation gap, \citealp{Yan11}). With our analysis of the optimal omniscient approximation $\apxcs$ and known bounds on the standard approximation $\apxda$ of online pricing mechanisms, the reduction framework yields better performance guarantees than the $O(\log n)$ bounds by \citet{QV23} by constant factors for these environments.  

The results detailed above are summarized in \Cref{table:Summary}.

\begin{table}
\centering
\resizebox{0.9\textwidth}{!}
{\begin{tabular}{c c c c c c}
    \hline
    Environment & \multicolumn{2}{c}{$\frac{\apxcs(n)}{(1+o(1))}$} & $\apxda(n)$ & \multicolumn{2}{c}{$\frac{\apxom(n)}{(1+o(1))}$} \\
    \hline
    Select $k = o(n)$ & $\tfrac{2}{\ln 2} \, \ln \frac{n}{k}$\textsuperscript{*} & ${\ln \frac{n}{k}}$ & $1$\textsuperscript{**} & \multicolumn{2}{c}{${\ln \frac{n}{k}}$} \\
    Select $k = \Theta(n)$ & $\tfrac{2}{\ln 2} \, \ln \frac{n}{k}$\textsuperscript{*} & ${2.27 \, (0.577 + \ln \frac{n}{k})}$ & $1$\textsuperscript{**} & \multicolumn{2}{c}{${2.27 \, (0.577 + \ln \frac{n}{k})}$} \\
    Matroids & $\tfrac{2}{\ln 2} \, \ln m$\textsuperscript{$\S$} & ${\ln n}$ & $1$\textsuperscript{**} & \multicolumn{2}{c}{${\ln n}$} \\
    Knapsack & $\tfrac{2}{\ln 2} \, \ln m$\textsuperscript{$\S$} & ${\ln n}$ & $2$ & \multicolumn{2}{c}{${2 \cdot \ln n}$} \\
    General Downward-Closed & $\tfrac{2}{\ln 2} \, \ln m$\textsuperscript{$\S$} & ${\ln n}$ & $O(\log \log m)$\textsuperscript{$\dagger$} & \multicolumn{2}{c}{${O(\log n \log 
    \log m)}$} \\
    \hline
    Select $1$ Online Oblivious &  \multicolumn{2}{c}{${\ln n}$} & $2$\textsuperscript{$\dagger \dagger$} & $O(\log n)$\,\textsuperscript{$\ddagger$} & ${2 \cdot \ln n}$ \\
    Select $1$ Online Sequential & \multicolumn{2}{c}{${\ln n}$} & $\frac{e}{e-1}$\textsuperscript{$\ddagger \ddagger$} & $O(\log n)$\,\textsuperscript{$\ddagger$} & ${\frac{e}{e-1} \cdot \ln n}$ \\
    Select $1$ Online IID & \multicolumn{2}{c}{${\ln n}$} & $\frac{e}{e-1}$\textsuperscript{$\ddagger \ddagger$} & $e \ln n$\,\textsuperscript{$\ddagger$} & ${\ln n}$ \\
    Matroid Online Oblivious &  \multicolumn{2}{c}{${\ln n}$} & $2$\textsuperscript{$\S\S$} & \multicolumn{2}{c}{${2 \cdot \ln n}$} \\
    Matroid Online Sequential & \multicolumn{2}{c}{${\ln n}$} & $\frac{e}{e-1}$\textsuperscript{$\ddagger \ddagger$} &  \multicolumn{2}{c}{${\frac{e}{e-1} \cdot \ln n}$} \\
    \hline
\end{tabular} }
\caption{Prior bounds (if any, references indexed by superscripts) and our upper bounds. $n = \text{number of agents}$, $m = \text{number} \allowbreak \text{ of} \allowbreak \text{ feasible} \allowbreak \text{ outcomes}$. Note that $m \geq n$ for all the environments listed in the table. The bounds for optimal omniscient approximation $\apxcs$ and omniscient approximation $\apxom$ are normalized by a $(1+o(1))$ factor. Prior work: \textsuperscript{*}~\citealp{HR08}; \textsuperscript{**}~\citealp{MS14}, \citealp{BdVSV11}; \textsuperscript{$\S$}~\citealp{FTT16}; \textsuperscript{$\dagger$}~\citealp{FGGS22}; \textsuperscript{$\dagger \dagger$}~\citealp{ESC84}, \citealp{CHMS10}; \textsuperscript{$\S\S$}~\citealp{KW12}; \textsuperscript{$\ddagger$}~\citealp{QV23}; \textsuperscript{$\ddagger \ddagger$}~\citealp{CHMS10}, \citealp{Yan11}. All bounds in the table also hold for the underlying auction environment. Pen testing algorithms with the same guarantees as downward-closed constraints can be achieved for arbitrary combinatorial constraints.}
\label{table:Summary}
\end{table}

Our reduction from pen testing to deferred-acceptance auction design improves bounds from previous literature for all the environments that we discuss;
however, it does not always yield a tight bound. For the online IID environment (the online environment where the ink levels of all pens are drawn IID), we identify an algorithm with an omniscient approximation $\apxom(n)$ that is a constant factor less than $\apxda(n) \, \apxcs(n)$.

Our results and techniques are of interest beyond pen testing.
Our upper bound of $(1 + o(1)) \ln n$ on the optimal omniscient approximation $\apxcs(n)$ is the ratio between the optimal surplus and the optimal consumer surplus, and is a mechanism design problem of independent interest \citep{HR08, FTT16, GL24}.
Recall that the surplus created in an auction equals the sum of the consumer surplus and the revenue generated by the auctioneer. However, when payments made by the winners are burnt instead of being transferred to the auctioneer, the total surplus created is just the consumer surplus. The need to burn payments can be diverse. For instance:
\begin{enumerate}
    \item Making monetary transactions is not desirable or repugnant, like in setting prices for welfare schemes for vulnerable populations. Ordeal-based mechanisms are employed in such scenarios, where agents make non-transferable non-monetary payments, like spending time waiting in a line to avail the provisions provided in the welfare scheme \citep{KK11, ABHOPW13, BCK16}.
    \item In computer systems like data packet routing networks, where technological infeasibility to collect payments necessitates charging non-monetary payments \citep{DN92, DGN03, HR08}.
    \item The auctioneer cannot be trusted to truthfully implement the auction, like in a blockchain, where a miner runs an auction to decide which set of transaction goes into the next block it proposes. To stop the miner from colluding with agents or from benefiting from injecting fake bids, it becomes essential to burn a substantial portion of the transaction fees collected from the agents \citep{Rou21, CS23, GTW24}.
\end{enumerate}

The technique to obtain the $(1 - \ln q)$-approximation in the single-agent environment can have further applications.
We express consumer surplus optimization as a linear program and explicitly compute the worst-case distribution.
This provides a proof technique which automatically gives tight approximation guarantees.
For example, we reuse the technique to compute the worst-case distribution for the omniscient approximation for the online IID environment.
We improve the bound obtained from the reduction framework to a multiplicative approximation ratio within an additive 1 of the $H_n$ lower bound of \citet{QV23} (where $H_n$ is the $n$\textsuperscript{th} harmonic number).

\textbf{Paper Organization.}
We make the reduction from the pen-testing problem to designing deferred-acceptance auctions formal in \Cref{sec:PTtoDAA}.
We execute the first step of the reduction
framework — upper bounding the optimal omniscient approximation $\apxcs(n)$ — in \Cref{sec:CSvS}.
We bound the approximation ratio between the surplus and consumer surplus in a single-agent environment in \Cref{sec:SingleAgent} and discuss the reduction from multi-agent to single-agent settings in \Cref{sec:Multi}.
The second step of the reduction from pen testing algorithms to deferred-acceptance mechanisms is discussed in \Cref{sec:DAAPrev}, where we review deferred-acceptance mechanisms for various environments and discuss the corresponding bounds we get for pen testing.
Finally, we revisit the online IID single item environment and achieve a tight (up
to $o(1)$ terms) bound in \Cref{sec:IID}.

\section{The Pen Testing Problem and Ascending-Price Auctions} \label{sec:PTtoDAA}
In this section, we formally define the combinatorial pen testing problem, generalized from \citet{QV23}, and explore its connections to ascending-price mechanisms.

\begin{comment}
\begin{table}
\centering
\resizebox{1.1\textwidth}{!}
{\begin{tabular}{c c c c c c}
    \hline
    Environment & \multicolumn{2}{c}{$\apxcs(n)$} & $\apxda(n)$ & \multicolumn{2}{c}{$\apxom(n)$} \\
    \hline
    $k$-Identical Goods & $2 \ln 2 \cdot (1 + o(1)) \ln \frac{n}{k}$ \textsuperscript{a}\, & $(1+o(1)) \ln \frac{n}{k}$ & $1$\textsuperscript{b} & \multicolumn{2}{c}{$(1+o(1)) \ln n$} \\ 
    Matroids &  \multicolumn{2}{c}{$(1 + o(1)) \ln \lceil \max_{\emptyset \neq X} \frac{|X|}{rank(X)} \rceil$} & $1$\textsuperscript{b} & \multicolumn{2}{c}{$(1 + o(1)) \ln \lceil \max_{\emptyset \neq X} \frac{|X|}{rank(X)} \rceil$} \\
    General Downward-Closed & \multicolumn{2}{c}{$(1 + o(1)) \ln n$} & $O(\log n)$\textsuperscript{c} & \multicolumn{2}{c}{$O(\log^2 n)$} \\
    \hline
    Online Prophet Inequalities &  \multicolumn{2}{c}{\multirow{3}{*}{$(1+o(1)) \ln n$}} & $2$\textsuperscript{d} & \multirow{3}{*}{$O(\log n)\,$\textsuperscript{e}} & $2(1+o(1)) \ln n$ \\
    Online Correlation Gap & & & $\frac{e}{e-1}$\textsuperscript{f} & & $\frac{e}{e-1} (1+o(1)) \ln n$ \\
    Online IID & & & $\frac{e}{e-1}$\textsuperscript{f} & & $(1 + o(1)) \ln n$ \\
    \hline
\end{tabular} }
\caption{Summary of Our Results (prior work-- {\footnotesize \textsuperscript{a} \citealp{HR08}; \textsuperscript{b} \citealp{MS14}; \textsuperscript{c} \citealp{FGGS22}; \textsuperscript{d} \citealp{CHMS10}, \citealp{ESC84}; \textsuperscript{e} \citealp{QV23}; \textsuperscript{f} \citealp{CHMS10}, \citealp{Yan11}})}
\label{table:Summary}
\end{table}
\end{comment}

\begin{definition} \label{def:pentesting}
    A pen testing instance is described by a set $N = \{1, 2, \dots, n\}$ of pens with unknown ink levels $\val_1, \dots, \val_n$, each $\val_i$ drawn independently from the distribution $F_i$, and a subset $\feasibility$ of the powerset of the pens denoting feasible subsets of pens. The residual ink level $\util_i$ of pen $i$ is initiated to $\val_i$. We can test the pens before making a decision. Test $(i, \theta_i)$ is done by writing with pen $i$ for time $\theta_i$. We receive a binary signal at the end of the test:
        \begin{enumerate}
            \item If $\util_i \geq \theta_i$: The test succeeds. The pen now has an (unobservable) ink level $\util_i \xleftarrow{} \util_i - \theta_i$.
            \item If $\util_i < \theta_i$: The test fails. The pen now has no ink left, i.e, $\util_i \xleftarrow{} 0$.
        \end{enumerate}
        We can choose to run multiple tests on the pens. We need to output some feasible subset $P \in \feasibility$ of pens, maximizing
        $$\sum\nolimits_{i \in P} \util_i.$$
\end{definition}
The goal is to optimize the total amount of ink remaining in the set of chosen pens.

Compare a pen testing instance with $n$ pens and a feasibility constraint $\feasibility$ with an $n$-agent auction under the same feasibility constraint. The original levels of ink correspond to the value each agent gets upon being allocated. The ink spent through testing is analogous to the prices in the auction. Thus, maximizing total residual ink is equivalent to optimizing consumer surplus, i.e, the sum of the values of the winning agents minus the sum of all payments.
Note that auctions generally give the auctioneer the added advantage of the ability to solicit bids from agents. In pen testing algorithms, we only get to learn whether the ink left in the pen is more than some threshold and in doing so, we irrevocably expend ink up to the threshold. This is similar to ascending-price auctions, where the auctioneer irreversibly increases the price faced by each agent, and in doing so, only learns whether the agent has a value at least the price.

\citet{MS14} describe the wide class of ascending-price auctions called deferred-acceptance mechanisms.

\begin{definition}[\citealp{MS14}; Deferred-Acceptance Auctions] \label{def:DA}
     A deferred-acceptance auction is held across multiple stages. For each stage $t$, the auction maintains a set of active bidders $A_t$ satisfying $A_1 \supseteq A_{2} \supseteq \dots \supseteq A_t$, where the initial active set $A_1$ is the set of all bidders. The auction is characterized by a (possibly randomized) pricing rule $\vec{p}$, mapping the history of the auction at stage $t$ to (discriminatory) prices for each agent satisfying $p_i(H_t) \geq p_i(H_{t-1})$ for all agents $i$ and for all histories $H_t$ at stage $t$ arising out of stage $t-1$ history $H_{t-1}$, i.e, the prices are monotonously non-decreasing for all agents over time. Agents can opt to drop out once the prices are updated in stage $t$. $A_{t+1}$ is the set of agents in $A_t$ that did not drop out in stage $t$. The mechanism terminates at stage $t$ when $A_t$ becomes feasible. The agents in $A_t$ are charged according to $\vec{p}(H_t)$.
\end{definition}
The definition from \citet{MS14} restricts deferred-acceptance mechanisms to be deterministic, which we relax for the purpose of designing pen testing algorithms. From the discussion above, note that any deferred-acceptance mechanism can be converted into a pen testing algorithm, by setting thresholds equal to the prices recommended by the mechanism.
Thus, the optimal (near optimal) pen testing algorithm corresponds to the consumer surplus optimal (near optimal) deferred-acceptance auction.
%The consumer surplus of the mechanism corresponds to the residual ink in the pens chosen by the pen testing algorithm.

We compare the performance of our deferred-acceptance mechanisms (and the corresponding pen testing algorithms) to the following benchmarks:
\begin{itemize}
    \item The \emph{omniscient benchmark} \citep{QV23}: The omniscient benchmark corresponds to the omniscient pen testing algorithm that exactly knows the ink levels in the pens, and thus, does not have to waste ink by writing with them.
    In the reduction to the auction environment, the omniscient benchmark corresponds to a mechanism that neither loses out on the payments nor has to stick to an increasing trajectory of prices.
    Thus, the omniscient benchmark corresponds to the surplus optimal sealed-bid mechanism.
    The \emph{omniscient approximation} $\apxom(n)$ corresponds to the ratio between the omniscient benchmark and the performance of a given pen testing algorithm in the pen testing setting, and the omniscient benchmark and the consumer surplus of a given deferred-acceptance auction in the auction environment.
    \item The \emph{standard benchmark}: The standard benchmark corresponds to the consumer-surplus-optimal sealed-bid mechanism.
    We define the \emph{standard approximation} $\apxda(n)$ to be the ratio between the standard benchmark and the consumer surplus of a candidate deferred-acceptance mechanism.
    While the standard benchmark does not have a natural pen testing analogy, observe that the standard benchmark is a tighter upper bound to the optimal pen testing algorithm than the omniscient benchmark.
    Thus, the standard approximation gives a better upper bound on the ratio between the optimal pen testing algorithm and the residual ink gathered by a given pen testing algorithm.
\end{itemize}

Now that we have established the consumer surplus is a quantity of interest for pen testing, we review consumer surplus optimization through virtual values in the next section.

\subsection{Consumer Surplus Optimization and Virtual Valuations}
Let us begin in a single-agent environment with one good. Let the agent's value be drawn from the distribution $F$. The quantile $q$ of an agent with value $\val \sim F$ is the measure with respect to $F$ of stronger values, i.e, $q = 1 - F(v)$. For $q \in [0, 1]$, let $\val(q)$ be the inverse demand function of $F$ satisfying $F(\val(q)) = 1-q$. In other words, $Pr_{\hat{v} \sim F}(\hat{v} > \val(q)) = q$. Throughout the paper, we assume $\val(1) = 0$ and $\int_0^0 \val(t)dt = 0$. It can be shown that these assumptions can be made without loss of generality.

Let $\cumval(q) = \int_0^q \val(t)dt$ be the price-posting surplus curve. Notice that $\cumval(q)$ is the expected surplus from posting a price $\val(q)$ to the agent, thereby allocating the good with probability $q$. By our assumption on $F$, $\cumval(0) = 0$. Let $\cumutil(q) = \cumval(q) - q\val(q)$ be the price-posting consumer surplus curve. Similar to the price-posting surplus curve, $\cumutil(q)$ is the consumer-surplus from posting a price $\val(q)$. Note that $0 \leq \cumutil(0) \leq \cumval(0) = 0$, and thus, $\cumutil(0) = 0$.

Let $\util(q) = U'(q) = -q\val'(q)$ be the marginal price-posting consumer surplus curve. Note that $\val$ is monotonously non-increasing, and hence its derivative $\val'$ is non-positive. Thus, $\util(q) \geq 0$ and $\cumutil$ is monotone non-decreasing. A quick note on derivatives: whenever $\val^\prime(q)$ is well defined, $\util(q)$ is well defined. At other points, $\util(q)$ can be calculated using any sub-derivative of $\val$. At points $q$ of discontinuity of $\val$, $\val'(q) = -\infty$ and thus $\util(q) = \infty$.

\begin{theorem}[\citealp{Myer81}] \label{thm:VirtualWelfare}
     In a Bayesian incentive-compatible mechanism with allocation rule $y$ and payment rule $p$ (over quantiles), the expected consumer surplus of an agent satisfies $$E_{q \sim U[0,1]}[\val(q)y(q) - p(q)] = E_{q \sim U[0,1]}[\util(q)y(q)] = E_{q \sim U[0, 1]}[-\cumutil(q) \, y'(q)] + U(0)y(0)$$ 
\end{theorem}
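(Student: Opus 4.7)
The plan is to prove the two equalities by coupling Myerson's envelope characterization of BIC mechanisms with two integration-by-parts steps in the quantile parameterization --- essentially Myerson's virtual-value calculation with the marginal consumer surplus $\util$ playing the role of the virtual value's ``non-revenue'' component.

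First, I would invoke the envelope theorem in quantile space. Writing $S(q) = \val(q) y(q) - p(q)$ for the interim utility of quantile $q$, BIC makes $\hat q \mapsto \val(q) y(\hat q) - p(\hat q)$ maximized at $\hat q = q$, which yields $S'(q) = \val'(q) y(q)$ almost everywhere (and at atoms of $\val$ via sub-derivatives). Integrating the interim utility against $dq$ and applying integration by parts then gives
\[
\int_0^1 S(q)\,dq = [q S(q)]_0^1 - \int_0^1 q S'(q)\,dq = S(1) + \int_0^1 (-q \val'(q)) y(q)\,dq = S(1) + \int_0^1 \util(q) y(q)\,dq.
\]
The boundary term $S(1) = \val(1) y(1) - p(1) = -p(1)$ vanishes under the paper's assumption $\val(1) = 0$ combined with individual rationality at the weakest type, forcing $p(1) = 0$. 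This establishes the first equality $E_q[\val(q) y(q) - p(q)] = E_q[\util(q) y(q)]$.

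For the second equality I would integrate by parts once more, now using $\util = \cumutil'$ with $y$ as the test function:
\[
\int_0^1 \util(q) y(q)\,dq = [\cumutil(q) y(q)]_0^1 - \int_0^1 \cumutil(q) y'(q)\,dq = \cumutil(1) y(1) - \cumutil(0) y(0) + E_q[-\cumutil(q) y'(q)].
\]
The paper's standing observation $\cumutil(0) = 0$ makes the $-\cumutil(0) y(0)$ boundary term identify with the $U(0) y(0)$ summand appearing in the statement (both equal zero); the $\cumutil(1) y(1)$ boundary term I would dispose of using the BIC-adapted normalization that the lowest-value quantile is never served, so $y(1) = 0$.

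The main obstacle will not be algebraic but regularity. The value function $\val$ is monotone non-increasing in $q$ but need not be differentiable --- atoms of $F$ force jumps in $\val$, at which $\util(q)$ should be read as a subderivative or an atom of the Stieltjes measure $d\cumutil$. Similarly, BIC only forces $y(q)$ to be monotone in $q$, so $y$ may itself have jumps. The right way to set up both integration-by-parts steps is therefore in the Riemann-Stieltjes sense: $-y$ is a distribution function on $[0,1]$, making $dy$ a signed measure, and $\cumutil$ is absolutely continuous with $d\cumutil = \util\,dq$ away from jumps of $\val$. Once one-sided limits are used consistently at the endpoints, the two displayed identities above go through as written, and the $U(0) y(0)$ term captures the boundary contribution at $q=0$.
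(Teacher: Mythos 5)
The paper cites this result to Myerson (1981) and does not reprove it, so there is no in-paper proof to compare against; judged on its own terms, your argument is the standard quantile-space derivation and is essentially correct. The envelope step $S'(q) = \val'(q)y(q)$ and the first integration by parts with boundary term $S(1) = -p(1) = 0$ (by $\val(1)=0$ and IR plus no-subsidy) give the first equality cleanly, and the second integration by parts and the Riemann--Stieltjes remarks about atoms of $\val$ and jumps of $y$ are the right technical scaffolding.

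One small point deserves sharper treatment. Your second integration by parts produces the boundary contribution $\cumutil(1)y(1) - \cumutil(0)y(0)$, not the $+\cumutil(0)y(0)$ written in the statement; the two are not the same term with a sign dropped, they arise at opposite endpoints. Under the paper's standing convention $\cumutil(0)=0$, the stated $U(0)y(0)$ is identically zero, so the statement is really asserting $E_q[\util(q)y(q)] = E_q[-\cumutil(q)y'(q)]$, and that equality requires $\cumutil(1)y(1)=0$, i.e., $y(1)=0$. You invoke that normalization (``the lowest-value quantile is never served''), and it is harmless since $\val(1)=0$ means serving $q=1$ adds nothing to surplus or consumer surplus; the paper adopts exactly this normalization explicitly later ($y_i(1)=0$ in the multi-agent section). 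So your proof is sound, but it would read better to state the boundary contribution as $\cumutil(1)y(1)$ and note that it vanishes by $y(1)=0$, rather than conflating it with the $U(0)y(0)$ term, which is zero for the unrelated reason $\cumutil(0)=0$.
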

Thus, the expected consumer surplus equals the expected virtual surplus with marginal consumer surplus as virtual values. \citet{Myer81} states that an allocation rule can be implemented as a truthful auction if and only if the allocation rule is monotone non-increasing in quantile space. Thus, the consumer surplus optimal mechanism optimizes for virtual surplus subject to the allocation rule being monotonously non-increasing.

When the marginal price-posting consumer surplus curve is non-increasing (and hence $\cumutil$ is concave), optimizing for virtual surplus automatically ensures monotonicity of the allocation rule. We will call these distributions with non-increasing marginal price-posting consumer surplus curves \emph{consumer surplus regular}.

However, $u$ is not monotone non-increasing for common distributions like the uniform and normal distribution. In fact, $u$ is monotone non-decreasing for these distributions. In such a case, the allocation that pointwise optimizes for virtual surplus might not satisfy monotonicity. \citet{Myer81} prescribes an ironing procedure to optimize for virtual surplus subject to monotonicity.
\begin{enumerate}
    \item Construct the concave hull $\icumutil$ of the price-posting consumer surplus curve $\cumutil$
    \item Define $\iutil(q) = \icumutil'(q)$
    \item Find the virtual surplus optimal mechanism using $\iutil$ as virtual surplus
\end{enumerate}
Since $\icumutil$ is concave, $\iutil$ is non-increasing, and hence optimizing for the ironed virtual surplus is easier. Throughout the paper, we will follow the convention of attaching a bar on top of a curve to describe the ironed curve.

In a multi-agent environment, the interim allocation rule for an agent is the single-agent allocation rule that arises in expectation over the quantiles of all other agents. This captures the perspective of the agent after knowing its value, but before learning the values of the other agents.
\begin{theorem} \label{thm:EquivIroning}
    Let $y$ be the interim allocation rule for some agent, satisfying $\frac{d}{dq} y(q) = 0$ for all $q$ such that $\cumutil(q) \neq \icumutil(q)$. Then
    $$E_{q}[\util(q)y(q)] = E_{q}[\iutil(q)y(q)]$$
    In other words, if the allocation rule does not differentiate between agents in an ironed region, then the ironed virtual surplus can be used to compute the consumer surplus instead of the actual virtual surplus.
\end{theorem}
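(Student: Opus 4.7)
The strategy is to apply the integration-by-parts identity of Theorem \ref{thm:VirtualWelfare} to both $\util$ and $\iutil$ separately, and then use the hypothesis on $y$ to show that the two resulting integrals agree on the nose.

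First, I would invoke Theorem \ref{thm:VirtualWelfare} to rewrite
\[
E_{q}[\util(q) y(q)] = -E_{q}[\cumutil(q) y'(q)] + \cumutil(0) y(0).
\]
The proof of Theorem \ref{thm:VirtualWelfare} is really just integration by parts applied to $\cumutil$ (with $\cumutil' = \util$) against $y$, together with the endpoint conventions $\val(1)=0$ and $y(1)=0$, so it applies verbatim to the ironed pair $(\icumutil, \iutil)$ and yields
\[
E_{q}[\iutil(q) y(q)] = -E_{q}[\icumutil(q) y'(q)] + \icumutil(0) y(0).
\]
Subtracting, and using that $\icumutil(0) = \cumutil(0) = 0$, the claim reduces to showing
\[
\int_0^1 \bigl(\cumutil(q) - \icumutil(q)\bigr) y'(q)\, dq = 0.
\]
The boundary identity $\icumutil(0)=0$ holds because the smallest concave majorant of a function that vanishes at an endpoint can itself be taken to vanish there: lowering $\icumutil(0)$ down to $\cumutil(0)$ preserves the pointwise bound $\icumutil \geq \cumutil$ and only weakens the concavity constraints that involve the endpoint.

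Second, I would split $[0,1]$ into the unironed set $S := \{q : \cumutil(q) = \icumutil(q)\}$ and its complement $S^c$, which is a union of open ironed intervals. On $S$ the integrand vanishes because $\cumutil - \icumutil = 0$. On $S^c$ the hypothesis that $y'(q)=0$ whenever $\cumutil(q) \neq \icumutil(q)$ kills the factor $y'(q)$, so the integrand again vanishes. The integral is therefore zero, and the two expectations coincide.

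I do not expect a serious obstacle here; the hypothesis on $y'$ is engineered precisely to make this telescoping argument work. The only mildly delicate points are the endpoint identity $\icumutil(0)=0$ and ensuring that $y$ is sufficiently regular (absolutely continuous in $q$) for the integration by parts to be valid — the latter is standard for interim allocation rules arising from Bayesian incentive-compatible mechanisms, since monotonicity in quantile space gives $y$ bounded variation.
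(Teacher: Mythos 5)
The paper states this theorem without proof (it is standard Myersonian ironing machinery, implicitly attributed to the Myerson-style framework cited just above). Your proof is correct and is exactly the standard argument: integrate by parts in both the ironed and unironed expressions to rewrite $\util,\iutil$ in terms of $\cumutil,\icumutil$, note that the boundary terms at $q=0$ agree because the concave hull of a function on a compact interval coincides with the function at the endpoints (so $\icumutil(0)=\cumutil(0)=0$), and then observe that $(\cumutil-\icumutil)\,y'$ is pointwise zero since at each $q$ at least one factor vanishes by hypothesis. Your reasoning for $\icumutil(0)=0$ is the right one, and in fact the same argument gives $\icumutil(1)=\cumutil(1)$, so the $q=1$ boundary term cancels without even invoking $y(1)=0$. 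One small technical note: monotone $y$ gives bounded variation, not absolute continuity, so a fully rigorous version should phrase the integration by parts in the Riemann--Stieltjes sense (interpreting $y'\,dq$ as the measure $dy$); this does not affect the argument since the pointwise-zero claim for $(\cumutil-\icumutil)\,dy$ survives unchanged.
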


\begin{theorem}[\citealp{AFHHM12,AFHHM19}] \label{thm:Amortize}
    Let $y$ be some allocation rule with interim allocation rule $y_i$ monotonously non-increasing for each agent $i$. Then, there exists a mechanism $\overline{y}$ with interim allocation rule $\overline{y}_i$ for agent $i$ such that the expected consumer surplus for agent $i$ equals
    $$E_{q \sim U[0, 1]}[\util_i(q) \, \overline{y}_i(q)] = E_{q \sim U[0, 1]}[\iutil_i(q) \, y_i(q)]$$
    In other words, the expected consumer surplus of $\overline{y}$ is the consumer surplus of the original mechanism if the virtual values were given by $\iutil_i$ for agent $i$ instead of $\util_i$
\end{theorem}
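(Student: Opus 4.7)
The plan is to construct $\overline{Y}$ by per-agent resampling inside ironed intervals and then invoke \Cref{thm:EquivIroning}. For each agent $i$, let the ironed intervals be the maximal intervals $[a,b] \subseteq [0,1]$ on which $\icumutil_i$ is strictly above $\cumutil_i$ in the interior, so that $\iutil_i$ is constant on each such $[a,b]$. Given a profile of quantiles $(q_1,\ldots,q_n)$, define $\overline{Y}$ as follows: independently for each $i$, if $q_i$ lies in an ironed interval $[a,b]$, redraw $q_i'$ uniformly from $[a,b]$; otherwise set $q_i' = q_i$. Then run the original mechanism $Y$ on the resampled profile $(q_1',\ldots,q_n')$. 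A short calculation via the law of total probability shows that each $q_i'$ is marginally uniform on $[0,1]$ and independent across agents, so $Y$ is fed its intended input distribution and $\overline{Y}$ is a feasible mechanism.

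Next I would compute the interim allocation rule. Outside ironed intervals the resampling is the identity, so $\overline{y}_i(q) = y_i(q)$. Inside an ironed interval $[a,b]$ the resampled $q'$ is uniform on $[a,b]$ independent of $q$, hence
\[\overline{y}_i(q) \;=\; \frac{1}{b-a}\int_a^b y_i(t)\,dt,\]
which is constant in $q$ on $[a,b]$. Thus $\tfrac{d}{dq}\overline{y}_i(q) = 0$ at every $q$ where $\cumutil_i \neq \icumutil_i$, matching the hypothesis of \Cref{thm:EquivIroning}, which then yields $E_q[\util_i(q)\,\overline{y}_i(q)] = E_q[\iutil_i(q)\,\overline{y}_i(q)]$. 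Moreover, $\overline{y}_i$ inherits monotonicity from $y_i$: it is constant on each ironed interval, and across intervals the averages respect the non-increasing order of $y_i$, so $\overline{Y}$ is implementable as a BIC mechanism.

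Finally, I would show $E_q[\iutil_i(q)\,\overline{y}_i(q)] = E_q[\iutil_i(q)\,y_i(q)]$. The integrands agree off ironed intervals. On each ironed interval $[a,b]$, $\iutil_i$ is a constant $c$, and
\[\int_a^b \iutil_i(q)\,\overline{y}_i(q)\,dq \;=\; c\int_a^b \overline{y}_i(q)\,dq \;=\; c\int_a^b y_i(q)\,dq \;=\; \int_a^b \iutil_i(q)\,y_i(q)\,dq,\]
since by construction $\overline{y}_i$ is the average of $y_i$ on $[a,b]$. Chaining with the previous equality gives the theorem. The main subtlety, which I would isolate as the central point of the argument, is verifying that the per-agent independent resampling leaves the joint distribution over input profiles to $Y$ unchanged; this is what ensures feasibility of $\overline{Y}$ when agents' ironed regions overlap in quantile space, and it is also what allows the interim calculation to reduce to averaging $y_i$ within a single agent's ironed interval without regard to the other agents.
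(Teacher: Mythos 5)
The paper does not give its own proof of this statement: it is imported as a cited result from \citet{AFHHM12,AFHHM19}, so there is nothing in the paper to compare your argument against. Judged on its own, your proof is correct and is essentially the standard ``resample within ironed intervals'' construction used for this kind of amortization lemma: you verify that per-agent independent resampling preserves the product-uniform input distribution to $Y$, compute the interim allocation $\overline{y}_i$ as the conditional average of $y_i$ over the ironed interval (hence flat there), invoke \Cref{thm:EquivIroning} to pass from $\util_i$ to $\iutil_i$, and then use that $\iutil_i$ is constant on each ironed interval together with $\int_a^b \overline{y}_i = \int_a^b y_i$ to conclude. The one point worth stating explicitly rather than asserting is monotonicity of $\overline{y}_i$: since $y_i$ is non-increasing, its average over $[a,b]$ lies between $y_i(a)$ and $y_i(b)$, and $\overline{y}_i$ equals $y_i$ outside ironed intervals, so the piecewise-constant replacement is still non-increasing and therefore BIC-implementable. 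With that spelled out, the argument is complete and self-contained.
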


\subsection{Near-Optimal Deferred-Acceptance Mechanisms for Consumer Surplus}
In this section, we prescribe a recipe similar to \citet{FHL20} to convert any approximately optimal deferred-acceptance mechanism for surplus into an approximately optimal deferred-acceptance mechanism for consumer surplus.

\begin{definition}[Virtual-Pricing Transformation of a Deferred-Acceptance Mechanism] \label{def:VirtualWelfareImplementation}
     Let $\DAOM$ be a deferred-acceptance mechanism designed to optimize surplus. Let $\val_i$ be the inverse demand function and $\util_i$ be the virtual value function (for consumer surplus) for agent $i$. The virtual-pricing transformation on $\DAOM$, denoted by $\DACS$, implements $\DAOM$ in ironed virtual price space, i.e, whenever $\DAOM$ posts a price $\hat{v}_i$ to agent $i$, $\DACS$ posts a price $\val_i(\theta_i)$ satisfying
    $$\theta_i = \sup \{\theta: \iutil_i(\theta) \geq \hat{v}_i\}$$
\end{definition}

We will make some preliminary observations about the transformation:
\begin{enumerate}
    \item Consider two runs of the transformed mechanism $\DACS$ with some agent having two different values, both corresponding to the same ironed virtual value (which can happen if both these values are within the same ironed interval). From the definition of the transform, the agent faces the smallest price needed to differentiate itself from virtual values smaller than the threshold set by the mechanism. Consequently, the mechanism does not post prices from the middle of an ironed interval. Thus, in both these runs of the mechanism, the agent behaves identically. Since the mechanism does not discriminate between values with the same ironed virtual value, the consumer surplus of the mechanism equals the ironed virtual surplus.
    \item If the original mechanism $\DAOM$ is a $\apxda$-approximation to the optimal surplus, in expectation over all product distributions, then the transformed mechanism $\DACS$ is a $\apxda$-approximation to the optimal ironed virtual surplus, and hence, to the optimal consumer surplus.
    \item Finally, it is straightforward to see that the transformed mechanism $\DACS$ posts non-decreasing prices to each agent. Hence, $\DACS$ is also a deferred-acceptance mechanism.
\end{enumerate}
Recall that the standard approximation $\apxda(n)$ is the ratio between the consumer surplus of the optimal deferred-acceptance mechanism and our proposed mechanism while the optimal omniscient approximation $\apxcs(n)$ is the ratio between the optimal surplus and the optimal consumer surplus. The proof of \Cref{thm:IntroSurplusDA} is now straightforward.

\begin{proof}[Proof of \Cref{thm:IntroSurplusDA}]
    Let $\OPTCS$ denote the expected optimal consumer surplus, $\OPTOM$ denote the expected optimal surplus, and $\DACS$ denote the expected consumer surplus of the deferred acceptance mechanism in \Cref{def:VirtualWelfareImplementation}. Then,
    $$\DACS \geq \tfrac{\OPTCS}{\apxda(n)} \geq \tfrac{\OPTOM}{\apxda(n) \, \apxcs(n)}$$
    where the first inequality follows from the definition of standard approximation $\apxda(n)$ and bullet $2$ in the discussion above, while the second follows from the definition of optimal omniscient approximation $\apxcs(n)$. The equivalent pen testing algorithm yields the necessary approximation ratios against the two benchmarks.
\end{proof}

In the next section, we analyze the optimal omniscient approximation $\apxcs(n)$ for general combinatorial environments.

\section{Consumer Surplus versus Surplus} \label{sec:CSvS}
Our approximation procedure follows two steps. In \Cref{sec:SingleAgent}, we show that the optimal surplus is at most $1-\ln q$ times the consumer surplus in a single-agent environment where the ex-ante probability of allocation is constrained to be at most $q$. In \Cref{sec:Multi}, we move from the single-agent environment to multi-agent environments combining approaches from \citet{FHL20} and \citet{HT19} to show a $(1+o(1)) \ln n$ approximation between the consumer surplus and optimal surplus in $n$-agent environments with general feasibility constraints.

%However, the single-agent approximation diverges to $\infty$ when $q \xrightarrow{} 0$, and does not help in approximating the consumer surplus against the optimal surplus for these small quantiles. \Cref{sec:Multi} applies an approach of \citet{HT19}, that shows it is sufficient to have a good approximation for consumer surplus only in the larger quantiles ($q = \omega(\frac{1}{n})$), to get around this challenge and to give a $(1+o(1)) \ln n$ approximation for $n$-agent environments with general feasibility constraints.

\subsection{The Single-Agent Problem} \label{sec:SingleAgent}
Recall that $\cumval, \cumutil$ and $\icumutil$ are the price-posting surplus curve, price-posting consumer surplus curve and the ironed consumer surplus curve respectively. In this section, we establish a ``closeness property" like in \citet{FHL20}.
\begin{theorem} \label{thm:Main}
    For an ex-ante allocation probability $q \in [0, 1]$, the ratio of the optimal surplus $\cumval(q)$ to the optimal consumer surplus $\icumutil(q)$ is at most $1 - \ln q$.
\end{theorem}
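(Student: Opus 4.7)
The plan is to reduce the general (possibly irregular) instance to the regular case in which $\cumutil$ is already concave, by showing that ironing can only increase the surplus curve. The starting point is the identity $\val(q)=\int_q^1 \util(s)/s\,ds$, which follows from $\util(s)=-s\val'(s)$ together with the boundary condition $\val(1)=0$. Substituting and swapping the order of integration gives
\[
\cumval(q) \;=\; \cumutil(q) + q\int_q^1 \util(s)/s\,ds.
\]
An integration by parts converts this to a form that depends on $\cumutil$ monotonically:
\[
\cumval(q) \;=\; q\,\cumutil(1) + q\int_q^1 \cumutil(s)/s^{2}\,ds.
\]
Because $\cumutil(s)\le\icumutil(s)$ on $[0,1]$ and $1/s^{2}\ge 0$, replacing $\cumutil$ by the concave hull $\icumutil$ yields an upper bound, and one recognizes the resulting quantity as $\icumval(q)$, the surplus curve associated with the ironed marginal $\iutil=\icumutil'$. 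Hence $\cumval(q)\le\icumval(q)$.

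Next I would establish the theorem for the ironed triple $(\iutil,\icumutil,\icumval)$, which is regular since $\iutil$ is non-increasing by concavity of $\icumutil$. Applying the same Fubini identity to the ironed curves and using monotonicity of $\iutil$,
\[
q\int_q^1 \frac{\iutil(s)}{s}\,ds \;\le\; \iutil(q)\cdot q\int_q^1\frac{ds}{s} \;=\; -q\,\iutil(q)\,\ln q \;\le\; -\icumutil(q)\,\ln q,
\]
where the last step uses $\icumutil(q)=\int_0^q \iutil \ge q\,\iutil(q)$, again from monotonicity of $\iutil$. Rearranging and combining with the reduction,
\[
\cumval(q)\;\le\;\icumval(q)\;=\;\icumutil(q) + q\int_q^1 \iutil(s)/s\,ds \;\le\; (1-\ln q)\,\icumutil(q),
\]
which is the stated bound. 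Tightness is witnessed by $\val(s)=-\ln s$: then $\util\equiv 1$, $\cumutil(q)=q$ (already concave, so $\icumutil=\cumutil$), and $\cumval(q)=q(1-\ln q)$, so the ratio is exactly $1-\ln q$.

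The main obstacle I anticipate is rigorously handling non-smoothness: atoms in the value distribution cause $\val$ to have downward jumps, so $\util$ contains Dirac masses, and on flat segments of the ironed curve $\iutil$ is only a sub-derivative. The cleanest route is to carry out the Fubini swap and the integration by parts in the Lebesgue--Stieltjes sense, treating $-d\val$ as the underlying non-negative measure; alternatively, one can first verify the chain of inequalities for piecewise-$C^{1}$ demand curves, where every manipulation is elementary, and then pass to the limit, as both $\cumval\le\icumval$ and the regular-case inequality are preserved under weak approximation of $\val$.
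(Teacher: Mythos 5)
Your proof is correct, and while it shares the paper's overall architecture (first reduce the irregular case to the regular one by showing $\cumval(q)\le\icumval(q)$, then prove the bound for a concave consumer-surplus curve), it diverges from the paper in how it handles the regular case. The paper's Lemma~\ref{thm:RSRDist} casts the problem as a linear program over marginal consumer-surplus curves $\util$ with a fixed surplus constraint, argues by monotonicity that the minimizer is the constant curve $\util\equiv\util(q)$, and then computes the resulting ratio $1-\ln q$ (recognizing the extremal instance as the exponential distribution). You instead bound the tail integral directly: $q\int_q^1 \iutil(s)/s\,ds\le -q\,\iutil(q)\ln q$ by pulling out $\iutil(q)$ (monotonicity of $\iutil$), and then $q\,\iutil(q)\le\icumutil(q)$ (again monotonicity). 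This is a cleaner, more elementary derivation that avoids the variational argument and does not require explicitly exhibiting the worst-case distribution, though it of course yields the same constant and you correctly note that $\val(s)=-\ln s$ (i.e., the exponential) is tight. Your reduction step matches the paper's Lemma~\ref{thm:IncRSCurve} essentially verbatim, down to the same integration-by-parts identity $\cumval(q)=q\,\cumutil(1)+q\int_q^1\cumutil(s)/s^2\,ds$. Your closing remark about carrying out the manipulations in the Lebesgue--Stieltjes sense (or by approximation) to handle atoms and ironed flats is a real concern that the paper glosses over; it is a useful note and does not indicate a gap in your argument.
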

We will begin by proving \Cref{thm:Main} for consumer surplus regular distributions, and then extend the result to all distributions.

\begin{lemma} \label{thm:RSRDist}
    For any consumer surplus regular distribution and an ex-ante allocation probability $q \in [0, 1]$, the ratio of the optimal surplus $\cumval(q)$ to the optimal consumer surplus $\cumutil(q)$ is at most $1 - \ln q$.
\end{lemma}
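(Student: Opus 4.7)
The plan is to express both $\cumval(q)$ and $\cumutil(q)$ in terms of the marginal consumer-surplus curve $\util$, and then exploit the consumer-surplus regularity assumption (i.e., $\util$ is non-increasing) via two complementary monotonicity inequalities that combine into the desired bound.

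First, using the definition $\util(s) = -s\,\val'(s)$ together with the normalization $\val(1) = 0$, I obtain the representation
\begin{equation*}
    \val(q) \;=\; \int_q^1 -\val'(s)\,ds \;=\; \int_q^1 \frac{\util(s)}{s}\,ds.
\end{equation*}
Then the key step uses $\util$ non-increasing in two complementary ways: on $[q,1]$, $\util(s) \leq \util(q)$ gives $\val(q) \leq \util(q)\int_q^1 \tfrac{ds}{s} = -\util(q)\ln q$; on $[0,q]$, $\util(t) \geq \util(q)$ gives $\cumutil(q) = \int_0^q \util(t)\,dt \geq q\,\util(q)$. Chaining these two bounds eliminates $\util(q)$ and yields $q\,\val(q) \leq -\ln q \cdot \cumutil(q)$. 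Finally, since $\cumval(q) = \cumutil(q) + q\,\val(q)$ by definition, this immediately gives $\cumval(q) \leq (1 - \ln q)\,\cumutil(q)$.

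The underlying linear programming perspective alluded to in the introduction is: fix $\cumutil(q) = 1$ and maximize $\cumval(q)$ over non-negative non-increasing $\util$. The two inequalities above identify the extremal point as the constant function $\util \equiv 1/q$ on $[0,1]$ (the truncated equal-revenue distribution), which saturates both bounds simultaneously and achieves $\cumval(q) = 1 - \ln q$, so the estimate is tight. The main technical point is setting up the reparametrization in terms of $\util$; once in hand, the two-sided monotonicity argument is immediate, so no serious obstacle is expected in the regular case. The genuine difficulty—extending the bound to non-regular distributions by replacing $\cumutil$ with the concave envelope $\icumutil$ in \Cref{thm:Main}—is deferred to the next step of the argument.
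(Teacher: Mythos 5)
Your proof is correct and follows essentially the same route as the paper: both express the surplus via the representation $\val(q)=\int_q^1 \util(s)/s\,ds$ and then invoke the two monotonicity bounds ($\util(s)\le\util(q)$ on $[q,1]$ and $\util(t)\ge\util(q)$ on $[0,q]$); the paper packages this as solving the LP explicitly at a constant $\util$, whereas you chain the two inequalities directly, which is the same argument stated a bit more tersely. One small slip in your side remark: the distribution with constant marginal consumer-surplus curve is the \emph{exponential} distribution, not the (truncated) equal-revenue distribution -- the latter has $\util(q)\propto 1/q$, not constant.
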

\begin{proof}
Conditional on generating a surplus $\cumval(q)$, we want to compute the distribution that minimizes consumer surplus $\cumutil(q)$.
\begin{align}
\cumutil(q) &= \int_0^q \util(t)dt. \nonumber \\
        \cumval(q) &= \cumutil(q) + q\val(q) \nonumber\\
        &= \int_0^q \util(t)dt - q\int_q^1 \val'(t)dt \nonumber\\
        \label{eqn:FirstVEqn}
        &= \int_0^q \util(t)dt + q\int_q^1 \frac{\util(t)}{t}dt.
\end{align}
We substitute $\util(t) = -t\val'(t)$ in the last equation.

Simultaneously, we enforce $\util(t) \geq 0$, and $\util(t)$ is non-increasing (since the distribution is consumer surplus regular), $\val(1) = 0$ and $\cumval(0) = 0$.

Finding the minimum consumer surplus distribution reduces to the following program.
$$\min \int_0^q \util(t)dt$$
subject to
$$\int_0^q \util(t)dt + q\int_q^1 \frac{\util(t)}{t}dt = \cumval(q)$$
$$\util(t) \geq 0, \util(t) \text{ is monotonously non-increasing}$$
$$\val(1) = 0, \cumval(0) = 0$$
The constraints in the last line can be enforced after solving for $u$.

Consumer surplus regularity dictates $\util(t) \geq \util(q)$ for $t \leq q$. Hence, minimizing $\int_0^q \util(t)dt$ would correspond to setting $\util(t) = \util(q)$ for $t \leq q$. Similarly, $\util(t) \leq \util(q)$ for $t \geq q$. Minimizing $\int_0^q \util(t)dt$ would mean minimizing $\cumval(q) - q\int_q^1 \frac{\util(t)}{t}dt$ which is achieved by setting $\util(t) = \util(q)$ for $t \geq q$. Let $u(0) = u(t) = u(1) = u$.
For a constant marginal price-posting consumer surplus curve,
$$\cumutil(q) = \int_0^q \util(t)dt = q \, u$$
$$\cumval(q) = \int_0^q \util(t)dt + q\int_q^1 \frac{\util(t)}{t}dt = [q - q\ln q] \, u$$
Thus, for any consumer surplus regular distribution, $\frac{\cumval(q)}{\cumutil(q)} \leq [1 - \ln q]$. The constant marginal consumer surplus curve that achieves this ratio corresponds to the exponential distribution.
\end{proof}

\begin{lemma} \label{thm:IncRSCurve}
    Let the consumer surplus curves $\cumutil, \hat{U}$ satisfy $\hat{U}(0) = 0$ and $\hat{U}(t) \geq \cumutil(t)$ for all $t \in [0, 1]$. Then, for the corresponding surplus curves, $\hat{V}(q) \geq \cumval(q)$.
\end{lemma}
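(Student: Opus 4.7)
The approach I would take is to express the surplus curve $\cumval$ as a pointwise monotone functional of the consumer surplus curve $\cumutil$, so that the hypothesis $\hat{U}\ge \cumutil$ immediately lifts to $\hat{V}\ge \cumval$. The key is to eliminate $\val$ entirely and write $\cumval$ in terms of $\cumutil$ alone.

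Starting from the identity $\cumval(q)=\cumutil(q)+q\val(q)$ used in the proof of \Cref{thm:RSRDist}, and from $\util(q)=\cumutil'(q)=-q\val'(q)$, I get $\val'(t)=-\cumutil'(t)/t$. Integrating from $q$ to $1$ and using the normalization $\val(1)=0$ yields
\[
\val(q)=\int_q^1 \frac{\cumutil'(t)}{t}\,dt.
\]
An integration by parts (with $u=1/t$, $dv=\cumutil'(t)\,dt$) then gives
\[
\val(q)=\cumutil(1)-\frac{\cumutil(q)}{q}+\int_q^1 \frac{\cumutil(t)}{t^2}\,dt,
\]
and substituting back into $\cumval(q)=\cumutil(q)+q\val(q)$ produces the clean formula
\[
\cumval(q)=q\,\cumutil(1)+q\int_q^1 \frac{\cumutil(t)}{t^2}\,dt.
\]
The same derivation applies to any consumer surplus curve, so $\hat V(q)=q\,\hat U(1)+q\int_q^1 \hat U(t)/t^2\,dt$.

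Since $\hat U(t)\ge \cumutil(t)$ for every $t\in[0,1]$, in particular $\hat U(1)\ge \cumutil(1)$ and $\hat U(t)/t^2\ge \cumutil(t)/t^2$ on $[q,1]$, so monotonicity of the integral delivers $\hat V(q)\ge \cumval(q)$ for each $q\in(0,1]$. The endpoint $q=0$ follows from $\hat V(0)=0=\cumval(0)$, which uses the hypothesis $\hat U(0)=0$ together with the paper's standing assumption $\cumval(0)=0$.

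The main technical obstacle I anticipate is justifying the integration by parts when $\cumutil$ is not differentiable, which occurs exactly when the underlying distribution has atoms and $\val$ has jumps (so $\util$ contains Dirac masses). I would handle this by interpreting the integrals in the Lebesgue--Stieltjes sense: on any interval $[q,1]$ with $q>0$, both $\cumutil$ and $1/t$ are of bounded variation, so the integration by parts identity still holds and the monotone dependence of $\cumval(q)$ on the curve $\cumutil$ is unaffected.
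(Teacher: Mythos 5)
Your proof is correct and takes essentially the same route as the paper: both integrate by parts to express $\cumval(q)=q\,\cumutil(1)+q\int_q^1 \cumutil(t)/t^2\,dt$ and then read off the pointwise monotone dependence of $\cumval$ on $\cumutil$. The only difference is cosmetic (you isolate $\val(q)=\int_q^1 \cumutil'(t)/t\,dt$ first and then substitute, whereas the paper integrates by parts directly in the expression for $\cumval$), plus you explicitly flag the Lebesgue--Stieltjes interpretation needed when $\cumutil$ is nondifferentiable, which the paper leaves implicit.
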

\begin{proof}
We rewrite equation \eqref{eqn:FirstVEqn} as follow.
\begin{equation}
    \begin{split}
        \cumval(q) &= \int_0^q \util(t)dt + q\int_q^1 \frac{\util(t)}{t}dt \\
        &= \cumutil(q) + q \Bigl[\frac{\cumutil(t)}{t}\Bigr]_{t = q}^{t = 1} + q\int_q^1 \frac{\cumutil(t)}{t^2}dt \\
        &= q\cumutil(1) + q\int_q^1 \frac{\cumutil(t)}{t^2}dt
    \end{split}
\end{equation}
The second equality is obtained through integrating by parts. Increasing $\cumutil$ pointwise clearly results in an increase in $\cumval$.
\end{proof}

\begin{proof}[Proof of \Cref{thm:Main}]
The optimal consumer surplus for an ex-ante allocation probability $q$ equals $\icumutil(q)$.

Consider the distribution with price-posting consumer surplus curve $\icumutil$. Let $\icumval$ be the price-posting surplus curve corresponding to $\icumutil$. From \Cref{thm:IncRSCurve} and \Cref{thm:RSRDist},
$$\frac{\cumval(q)}{\icumutil(q)} \leq \frac{\icumval(q)}{\icumutil(q)} \leq 1 - \ln q$$
The second inequality holds from \Cref{thm:RSRDist}, since $\icumval$ is the price-posting surplus curve of a consumer surplus regular distribution (we made $\icumutil$ concave through ironing).
\end{proof}

See \Cref{sec:SurplusofOPTCS} for a brief discussion on the surplus generated by the single-agent optimal consumer surplus auction. 

\subsection{Multi-Agent Environments} \label{sec:Multi}
In this section, we convert the single-agent bound from \Cref{thm:Main} into an upper bound on the ratio between the optimal surplus and optimal consumer surplus (aka the optimal omniscient approximation) for multi-agent environments. We outline our approach before delving into the proof. Consider the interim allocation rule $y_i$ for agent $i$ in the surplus optimal mechanism $\OPTOM$. Without loss of generality, assume $y_i(1) = 0$. The expected surplus for agent $i$ equals
$$E_{q \in U[0, 1]}[y_i(q) \, \val_i(q)] = \bigl[y_i(q) \, \cumval_i(q) \bigr]^{q = 1}_{q = 0} + E_{q \in U[0, 1]}[-y_i'(q) \, \cumval(q)] = E_{q \in U[0, 1]}[-y_i'(q) \, V(q)]$$
This equality comes from integrating by parts ($y_i(q) \cumval_i(q) $ vanishes at both, $q = 0$ and $1$, since $\cumval_i(0) = 0$ and $y_i(1) = 0$). Suppose there exists a constant $\apxrough$ such that $\cumval_i(q) \leq \apxrough \, \icumutil_i(q)$ for all $q \in [0, 1]$, then,
$$E_{q \in U[0, 1]}[-y_i'(q) \, \cumval(q)] \leq E_{q \in U[0, 1]}[-y_i'(q) \, \alpha \, \icumutil(q)] = \apxrough \cdot E_{q \in U[0, 1]}[y_i(q) \, \iutil_i(q)]$$

The above inequality connects the expected surplus to the expected ironed consumer surplus of the interim allocation rule $y_i$. Since $\frac{d}{dq} y_i$ need not be equal to $0$ whenever $\cumutil(q) \neq \icumutil(q)$, $E_{q \in U[0, 1]}[y_i(q) \, \iutil_i(q)] \neq E_{q \in U[0, 1]}[y_i(q) \, \util_i(q)]$. But, \Cref{thm:Amortize} shows the existence of a mechanism $\overline{y}$ with expected consumer surplus of each agent $i$ equal to $E_{q \in U[0, 1]}[y_i(q) \, \iutil_i(q)]$. Thus, the expected consumer surplus of $\overline{y}$ is an $\apxrough$-approximation to the optimal surplus. The consumer surplus optimal mechanism $\OPTCS$ will only do better. 

%\Anote{We know that changing the interim rule from $y_i$ to $\overline{y_i}$ will get the correct consumer surplus for agent $i$. But, it is not obvious why the interim allocation rules $\overline{y_1}, \dots, \overline{y_n}$ can together form a mechanism. That's why we are citing the result for ALaei et al.}

Unfortunately, as $q \xrightarrow{} 0$, the bound from \Cref{thm:Main} of $\frac{\cumval_i(q)}{\icumutil_i(q)} \leq (1 - \ln q)$ diverges to $\infty$. Thus, we need to show that the loss from ignoring the small (strong) quantiles is not much. Suppose we find a mechanism $\bridgemechanism{1}$ with interim allocation $\bridgeallocation{i}$ for agent $i$, near optimal surplus, and $\bridgeallocation{i}'(q) = 0$ for $q \in [0, \epsilon]$. Then, $y_i'(q) \times \cumval(q) = y_i'(q) \times \cumutil(q) = 0$ in this region. For $q \in [\epsilon, 1]$, we know $\cumval(q) \leq (1 - \ln \epsilon) \times \icumutil(q)$. Thus,
$$(1 - \ln \epsilon) \times \OPTCS \geq \text{surplus}(\bridgemechanism{1})$$

\citet{HT19} give the construction of such a mechanism $\bridgemechanism{1}$, originally intended to design good mechanisms with only sample access to the distributions of bidders' bids.
\begin{definition}[\citealp{HT19}; $\epsilon$-buffering rule]
    Given an allocation rule $\vec{y} = (y_1, y_2, \dots, y_n)$ and a quantile $\epsilon \in [0, 1]$, the $\epsilon$-buffering rule for $\Vec{y}$ simulates $y$ with quantiles transformed on each agent as follows:
    \begin{itemize}
        \item Top inflate: for any $q_i \in [0, \epsilon]$, return $0$
        \item For any $q_i \in [\epsilon, 1-\epsilon]$, return $\frac{q_i - \epsilon}{1-2\epsilon}$
        \item Bottom deflate: for any $q_i \in [1-\epsilon, 1]$, return $1$
    \end{itemize}
\end{definition}
In essence, the top inflate branch makes the $\epsilon$-buffering rule treat all agents with a small quantile as if they had quantile $0$ and thus, $y_i'(q) = 0$ for $q \in [0, \epsilon]$ (the bottom deflate branch is unnecessary for the analysis of monotone payoff curves like surplus $\cumval$ and consumer surplus $\cumutil$; see \Cref{rem:BottomDeflate}).

\begin{theorem}[\citealp{HT19}] \label{thm:HTAPX}
    Let $\bridgemechanism{1}$ be the $\epsilon$-buffering rule of the surplus optimal mechanism $\OPTOM$. Then, $\OPTOM \leq \frac{1}{(1 - \frac{\epsilon}{1 - \epsilon}) \, (1 - \epsilon) \, (1 - 2n \epsilon)} \times \text{surplus}(\bridgemechanism{1})$.
\end{theorem}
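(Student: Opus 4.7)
The target inequality simplifies because $(1-\tfrac{\epsilon}{1-\epsilon})(1-\epsilon) = 1-2\epsilon$, so it suffices to show $\text{surplus}(M) \geq (1-2\epsilon)(1-2n\epsilon)\,\OPTOM$. The plan has three steps: first, lower-bound the interim allocation $\hat{y}_i$ of $M$ via a conditioning argument that strips off the non-middle branches of $\phi$; second, rewrite the resulting integral in a clean form via integration by parts and a linear change of variables; third, invoke concavity of the price-posting surplus curve $\cumval_i$ to recover $\OPTOM$ with the desired multiplicative loss.

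First I would let $Y$ denote the ex-post allocation rule of $\OPTOM$ and $y_i(q) = E_{q_{-i} \sim U[0,1]^{n-1}}[Y_i(q, q_{-i})]$ its interim allocation (monotone non-increasing by surplus optimality), and I would condition the definition $\hat{y}_i(q) = E_{q_{-i}}[Y_i(\phi(q),\phi(q_{-i}))]$ on the event $E_{-i} = \{q_j \in [\epsilon, 1-\epsilon]~\text{for all}~j\neq i\}$, which has probability $(1-2\epsilon)^{n-1}$. On $E_{-i}$ the buffered quantiles $\phi(q_j)$ for $j\neq i$ are IID uniform on $[0,1]$, because the middle branch of $\phi$ maps $[\epsilon,1-\epsilon]$ linearly onto $[0,1]$; so the conditional expectation reduces to $y_i(\phi(q))$. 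Since $Y_i \geq 0$ on the complement of $E_{-i}$, this gives $\hat{y}_i(q) \geq (1-2\epsilon)^{n-1}\, y_i(\phi(q))$ for every $q \in [0,1]$.

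Next I would split $\int_0^1 y_i(\phi(q)) v_i(q)\,dq$ across the three regions of $\phi$. The $[1-\epsilon,1]$ piece vanishes because $y_i(1)=0$; the $[0,\epsilon]$ piece contributes $y_i(0)\cumval_i(\epsilon)$; and integrating by parts on the middle $[\epsilon, 1-\epsilon]$ piece with the substitution $r = (q-\epsilon)/(1-2\epsilon)$ produces a boundary term $-y_i(0)\cumval_i(\epsilon)$ that exactly cancels the top-inflate contribution. This yields the clean identity
$$\int_0^1 y_i(\phi(q)) \, v_i(q)\,dq = \int_0^1 (-y_i'(r))\,\cumval_i(\epsilon + (1-2\epsilon) r)\,dr.$$
Since $v_i$ is non-increasing, $\cumval_i$ is concave with $\cumval_i(0)=0$. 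Monotonicity gives $\cumval_i(\epsilon + (1-2\epsilon) r) \geq \cumval_i((1-2\epsilon) r)$, and concavity applied to the convex combination $(1-2\epsilon)\cdot r + (2\epsilon)\cdot 0$ gives $\cumval_i((1-2\epsilon) r) \geq (1-2\epsilon)\cumval_i(r)$. Hence the right-hand side is at least $(1-2\epsilon)\int_0^1 (-y_i'(r)) \cumval_i(r)\,dr$, which equals $(1-2\epsilon)\,E_q[y_i(q) v_i(q)]$ after integrating by parts backwards; summing over $i$ gives $(1-2\epsilon)\,\OPTOM$.

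Combining the two preceding paragraphs, $\text{surplus}(M) \geq (1-2\epsilon)^n\,\OPTOM$, and Bernoulli's inequality in the form $(1-2\epsilon)^n \geq (1-2\epsilon)(1-2n\epsilon)$ finishes the proof. The main obstacle is the cancellation in the second step: the $y_i(0)\cumval_i(\epsilon)$ term from the top-inflate region must exactly offset the boundary term produced by integration by parts on the middle region, and this requires carefully tracking the chain-rule factor $1/(1-2\epsilon)$ in the derivative of $y_i((q-\epsilon)/(1-2\epsilon))$ against the Jacobian of the substitution. Once the clean integral form is obtained, the concavity estimate and the Bernoulli step are routine.
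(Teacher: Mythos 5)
The paper cites this theorem directly from \citet{HT19} without reproducing a proof, so there is no in-paper argument to compare against; judging your reconstruction on its own merits, it is correct. All three steps check out: conditioning on $E_{-i}$ is exactly right (on that event $\phi(q_{-i})$ is uniform on $[0,1]^{n-1}$, so the conditional interim allocation reduces to $y_i(\phi(q))$, giving the factor $(1-2\epsilon)^{n-1}$); the cancellation of $y_i(0)\cumval_i(\epsilon)$ between the top-inflate contribution and the boundary term of the middle-branch integration by parts is exact, since the antiderivative of $(1-2\epsilon)\,v_i(\epsilon+(1-2\epsilon)r)$ in $r$ is precisely $\cumval_i(\epsilon+(1-2\epsilon)r)$; and the concavity-plus-monotonicity step uses only $\cumval_i(0)=0$, $\cumval_i$ concave, $y_i(1)=0$, and $-y_i'\geq 0$, all of which hold for the surplus-optimal interim allocation. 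Your chain of inequalities actually yields the slightly stronger clean bound $\OPTOM \leq (1-2\epsilon)^{-n}\,\text{surplus}(M)$ before the Bernoulli relaxation; the awkward product form $(1-\tfrac{\epsilon}{1-\epsilon})(1-\epsilon)(1-2n\epsilon)$ in the theorem statement suggests the HT19 argument decomposes the loss into three separately-bounded stages, but since that product simplifies to $(1-2\epsilon)(1-2n\epsilon)$ your derivation subsumes it. The one implicit regularity assumption is that $y_i$ is absolutely continuous for the two integrations by parts; strictly speaking the surplus-optimal interim allocation may have jumps, so these should be read as Riemann--Stieltjes computations, but this is consistent with the paper's own informal use of $y_i'$ elsewhere and is not a substantive gap.
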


\begin{theorem} \label{thm:GeneralFeasibleMultiAgentEnvironments}
    Consider an $n$-agent environment with an arbitrary feasibility constraint. The optimal omniscient approximation is at most $\frac{1 - \ln \epsilon}{(1 - \frac{\epsilon}{1 - \epsilon}) \, (1 - \epsilon) \, (1 - 2n \epsilon)}$.
\end{theorem}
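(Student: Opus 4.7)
I would assemble the theorem from three ingredients already stated in the excerpt: the single-agent closeness bound of \Cref{thm:Main}, the $\epsilon$-buffering approximation of \Cref{thm:HTAPX}, and the amortization result \Cref{thm:Amortize}. The paragraphs preceding the theorem essentially lay out the needed pieces; the plan is to stitch them together while being careful about boundary conditions.

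First, let $\OPTOM$ denote the surplus optimal mechanism with interim allocations $y_i$, each monotone non-increasing in quantile space, and let $M$ be its $\epsilon$-buffering rule, with interim allocations $\hat{y}_i$. Since $\hat{y}_i$ arises from $y_i$ by a monotone reparameterization of quantiles, it remains monotone non-increasing, and by the top-inflate branch $\hat{y}_i$ is constant on $[0,\epsilon]$, so $\hat{y}_i'(q) = 0$ there. \Cref{thm:HTAPX} then gives $\OPTOM \leq \frac{1}{(1-\frac{\epsilon}{1-\epsilon})(1-\epsilon)(1-2n\epsilon)} \cdot \text{surplus}(M)$, so it suffices to show $\text{surplus}(M) \leq (1 - \ln \epsilon)\,\OPTCS$.

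For that bound, I would write the per-agent contribution to $\text{surplus}(M)$ as $E_q[-\hat{y}_i'(q)\,\cumval_i(q)]$ via integration by parts (using $\cumval_i(0) = 0$ and $\hat{y}_i(1) = 0$ WLOG). Because $\hat{y}_i'$ vanishes on $[0,\epsilon]$, only $q \geq \epsilon$ contributes, and on that range \Cref{thm:Main} yields $\cumval_i(q) \leq (1 - \ln q)\,\icumutil_i(q) \leq (1 - \ln\epsilon)\,\icumutil_i(q)$. A second integration by parts converts $E_q[-\hat{y}_i'(q)\,\icumutil_i(q)]$ back into $E_q[\hat{y}_i(q)\,\iutil_i(q)]$, the ironed virtual surplus of $\hat{y}$. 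Since each $\hat{y}_i$ is monotone non-increasing, \Cref{thm:Amortize} furnishes a mechanism whose per-agent consumer surplus is exactly this ironed virtual surplus, so $\sum_i E_q[\hat{y}_i(q)\,\iutil_i(q)] \leq \OPTCS$. Chaining the three inequalities gives the claimed ratio.

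The main obstacle is bookkeeping rather than conceptual: I need $\hat{y}_i(1) = 0$ to kill the boundary term in the first integration by parts (enforced WLOG by appending a zero-allocation tail) and genuine monotonicity of $\hat{y}_i$ to invoke \Cref{thm:Amortize}. Both follow immediately from the structure of $\epsilon$-buffering applied to the already monotone surplus-optimal $y_i$, so once the boundary conditions are recorded, the argument is a direct composition of the pre-established pieces.
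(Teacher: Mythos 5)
Your proposal is correct and follows essentially the same route as the paper: apply the $\epsilon$-buffering rule to $\OPTOM$, use \Cref{thm:HTAPX} to control the surplus loss, observe that $\hat y_i'$ vanishes on $[0,\epsilon]$ so the integration-by-parts expression for $\text{surplus}(M)$ only picks up the range where \Cref{thm:Main} gives the uniform $(1-\ln\epsilon)$ factor, and then invoke \Cref{thm:Amortize} to land on $\OPTCS$. Your extra bookkeeping about $\hat y_i(1)=0$ and the monotonicity of $\hat y_i$ (which the paper leaves implicit in the "discussion above" it references) is accurate and unobjectionable.
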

\begin{proof}
    The proof follows by combining \Cref{thm:HTAPX} with the discussion above. Let $\bridgemechanism{1}$ be the $\epsilon$-buffering rule of the surplus optimal mechanism $\OPTOM$. Then,
    $$\OPTOM \leq \frac{1}{(1 - \frac{\epsilon}{1 - \epsilon}) \, (1 - \epsilon) \, (1 - 2n \epsilon)} \times \text{surplus}(\bridgemechanism{1}) \leq \frac{1 - \ln \epsilon}{(1 - \frac{\epsilon}{1 - \epsilon}) \, (1 - \epsilon) \, (1 - 2n \epsilon)} \times \OPTCS$$
\end{proof}

\begin{corollary} \label{thm:Tight}
    In $n$-agent environments with an arbitrary feasibility constraint, the optimal omniscient approximation $\apxcs(n) \leq (1 + o(1)) \ln n$.
\end{corollary}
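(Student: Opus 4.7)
The plan is to deduce the corollary directly from \Cref{thm:GeneralFeasibleMultiAgentEnvironments} by optimizing the bound
\[
    \apxcs(n) \;\leq\; \frac{1 - \ln \epsilon}{\left(1 - \tfrac{\epsilon}{1 - \epsilon}\right)(1 - \epsilon)(1 - 2n\epsilon)}
\]
over $\epsilon = \epsilon(n) \in (0, 1/(2n))$. The numerator grows logarithmically as $\epsilon \to 0$, while each of the three factors in the denominator tends to $1$ provided $\epsilon \to 0$ and $n\epsilon \to 0$. So I need to pick $\epsilon$ that simultaneously makes $n\epsilon$ small and keeps $-\ln \epsilon$ within a $(1+o(1))$ factor of $\ln n$.

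The key observation is that the numerator is essentially $-\ln \epsilon$ for small $\epsilon$, so I should choose $\epsilon = n^{-(1+o(1))}$. A concrete choice is $\epsilon = 1/(n \ln n)$: then $-\ln \epsilon = \ln n + \ln \ln n = (1+o(1)) \ln n$, and simultaneously $2n\epsilon = 2/\ln n \to 0$, $\epsilon \to 0$, and $\epsilon/(1 - \epsilon) \to 0$. Therefore each of the three denominator factors equals $1 - o(1)$, and their product is also $1 - o(1)$.

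Putting these together, the numerator is $(1+o(1))\ln n$ and the denominator is $1 - o(1)$, so their ratio is $(1+o(1))\ln n$, which gives the corollary. I would write a single short proof: instantiate $\epsilon = 1/(n \ln n)$ in \Cref{thm:GeneralFeasibleMultiAgentEnvironments}, simplify each factor, and conclude. There is no genuine obstacle here; the only delicate point is verifying that the $o(1)$ terms from the denominator do not inflate the leading constant, which is immediate because $(1 - o(1))^{-1} = 1 + o(1)$ and $(1+o(1))(1+o(1)) = 1+o(1)$ absorbs into the leading $\ln n$ factor.
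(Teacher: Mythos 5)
Your proposal is correct and matches the paper's proof essentially exactly: both substitute $\epsilon = 1/(n\ln n)$ into \Cref{thm:GeneralFeasibleMultiAgentEnvironments}, note the numerator is $1 + \ln n + \ln\ln n = (1+o(1))\ln n$, and check that each of the three denominator factors is $1 - o(1)$. The only difference is that the paper carries out the algebraic simplification explicitly while you argue it at the $o(1)$ level, which is a fine shortcut.
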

\begin{proof}
    Setting $\epsilon = \frac{1}{n \ln n}$ in \Cref{thm:GeneralFeasibleMultiAgentEnvironments}, we get
    \begin{equation}
        \notag
        \begin{split}
            \zeta(n) &\leq \frac{1 - \ln \frac{1}{n \ln n}}{(1 - \frac{\frac{1}{n \ln n}}{1 - \frac{1}{n \ln n}}) \, (1 - \frac{1}{n \ln n}) \, (1 - 2n \cdot \frac{1}{n \ln n})} \\
            &= \frac{1 + \ln n + \ln \ln n}{(1 - \frac{1}{n \ln n - 1}) \, (1 - \frac{1}{n \ln n}) \, (1 - \frac{2}{\ln n})} \\
            &= \frac{n \ln n - 1}{n \ln n - 2} \cdot \frac{n \ln n}{n \ln n - 1} \cdot \frac{\ln n - 2}{\ln n} \cdot \frac{1 + \ln n + \ln \ln n}{\ln n} \times \ln n \\
            &= (1 + o(1)) \ln n \qedhere
        \end{split}
    \end{equation}
\end{proof}

\begin{remark} \label{rem:BottomDeflate}
    The $\epsilon$-buffering rule of \citet{HT19} can be implemented without a bottom deflate branch. This would strengthen the approximation guarantee to $\frac{1}{(1-\epsilon) \, (1-n \epsilon)}$. However, substituting $\epsilon= \frac{1}{n \ln n}$ will not result in a bound tighter than $(1+o(1)) \ln n$.
\end{remark}

\begin{remark} \label{rem:FTT16}
    \citet{FTT16} show that the optimal omniscient approximation as a function of the number of outcomes $m$ in the feasibility constraint is at most $\tfrac{2}{\ln 2} (1 + o(1)) \ln m$. \Cref{thm:Tight} improves their bound whenever the feasibility constraint has cardinality at least $\sqrt{n}$. Note that all downward-closed constraints have at least $n$ feasible outcomes. For such feasibility constraints, \Cref{thm:Tight} yields an exponential improvement. As the number of feasible outcomes become exponential in $n$, their bound declines to $O(n)$.
\end{remark}

\begin{remark} \label{rem:Tight}
    The upper bound from \Cref{thm:Tight} matches the lower bound of \citet{QV23}. In the single-item environment with values of agents drawn IID from the exponential distribution, they show that the optimal surplus is $H_n = (1+o(1)) \ln n$ times the optimal consumer surplus, where $H_n$ is the $n^{th}$ harmonic number. 
\end{remark}

\subsection{The $k$-Identical Goods Environment}
\Cref{thm:GeneralFeasibleMultiAgentEnvironments} and \Cref{thm:Tight} give an over-arching bound for the optimal omniscient approximation $\apxcs(n)$ that holds in any environment. However, specific environments admit better approximations. For instance, we will give a tighter bound for $\apxcs(n)$ for the $k$-identical goods environment, where any set of at most $k$-agents can be allocated simultaneously.

\begin{theorem} \label{thm:TractableTightkID}
    In the $n$-agent $k$-identical goods environment, the optimal omniscient approximation satisfies
    \begin{enumerate}
        \item $\apxcs(n) \leq (1 + o(1)) \ln \frac{n}{k}$ when $k = o(n)$, and
        \item $\apxcs(n) \leq 2.27 \, (1 + o(1)) (0.577 + \ln \tfrac{n}{k})$, when $k = \Theta(n)$.
    \end{enumerate}
\end{theorem}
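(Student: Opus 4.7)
The plan is to specialize the multi-agent framework of \Cref{thm:GeneralFeasibleMultiAgentEnvironments} to the $k$-identical goods setting, where the surplus-optimal mechanism allocates to the $k$ smallest-quantile agents. The key observation is that the ``effective'' quantile scale here is $k/n$ rather than $1/n$, which should let us replace the general $\ln n$ bound by $\ln(n/k)$. Following the blueprint of \Cref{thm:Tight}, I would combine the single-agent pointwise bound $\cumval_i(q) \leq (1 - \ln q) \, \icumutil_i(q)$ from \Cref{thm:Main} with an $\epsilon$-buffered version of the surplus-optimal mechanism, yielding a bound of the form $\OPTOM \leq (1 - \ln \epsilon) \cdot \beta(\epsilon)^{-1} \cdot \OPTCS$, where $\beta(\epsilon)$ is the surplus retention of the buffered mechanism relative to $\OPTOM$.

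The main obstacle, and the content of the specialization, is to replace the generic buffering guarantee of \Cref{thm:HTAPX} by a sharper $k$-identical-goods version. The generic bound contains a factor $(1 - 2n\epsilon)$ that forces $\epsilon = o(1/n)$, making $1 - \ln \epsilon = \Theta(\ln n)$ regardless of $k$. In the $k$-identical goods setting only the top $k$ quantiles carry surplus, and the only damage from top-inflation is that inflated agents can displace the marginal top-$k$ winners; I would show this costs at most $O(k\epsilon)$ in expectation by tracking displacements rather than total affected agents, producing $\beta(\epsilon) = 1 - O(k\epsilon)$ (and similarly for bottom-deflation, which can often be omitted as in \Cref{rem:BottomDeflate}). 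This refinement permits $\epsilon$ as large as $\Theta(k/n)$, possibly shrunk by an additional $o(1)$ factor, while keeping $\beta(\epsilon) = \Theta(1)$.

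The last step is to optimize $\epsilon$ separately in the two regimes. For $k = o(n)$, setting $\epsilon = k / (n \ln(n/k))$ makes $1 - \ln \epsilon = 1 + \ln(n/k) + \ln \ln(n/k) = (1 + o(1)) \ln(n/k)$ and $\beta(\epsilon)^{-1} = 1 + o(1)$, matching the lower bound of \citet{QV23} up to $(1+o(1))$ factors. For $k = \Theta(n)$, $\ln(n/k) = \Theta(1)$, so neither $1$ nor the harmonic-number correction $\gamma \approx 0.577$ can be absorbed into $(1+o(1))$; here I would numerically minimize $(1 - \ln \epsilon)/\beta(\epsilon)$ over $\epsilon$, compare to the $H_{n/k} \approx 0.577 + \ln(n/k)$ benchmark from \citet{QV23}, and take the worst-case ratio over $k/n \in (0,1]$. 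The additive $0.577$ is Euler's constant appearing via $H_{n/k} \approx \gamma + \ln(n/k)$, and the multiplicative $2.27$ emerges as the numerical worst-case of this constrained optimization over the constant regime $k = \Theta(n)$.
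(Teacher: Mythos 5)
The proposal shares the high-level blueprint of the paper's proof (get an interim allocation rule that is flat on $[0,\epsilon]$, apply the single-agent bound $\cumval(q)\le(1-\ln q)\icumutil(q)$ only at quantiles $\ge\epsilon$, then tune $\epsilon \approx k/(n\ln(n/k))$), and the final step --- comparing numerically against $H_{n/k}\approx 0.577+\ln(n/k)$ to extract the $2.27$ factor in the $k=\Theta(n)$ regime --- is exactly what the paper does. But the central technical step, building the surplus-retaining mechanism that is constant on $[0,\epsilon]$, is done by a genuinely different construction than you propose, and your version has a gap.

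You propose to keep the HT19 $\epsilon$-buffering of the surplus-optimal mechanism and simply sharpen the $(1-2n\epsilon)$ factor to $1-O(k\epsilon)$ for $k$-identical goods. This is asserted, not argued, and the asserted form cannot be right. With your choice $\epsilon=\Theta(k/n)$, $1-O(k\epsilon)=1-O(k^2/n)$, which is a nonvacuous constant only for $k=O(\sqrt n)$ and is wildly negative for the $k=\Theta(n)$ regime that part~(2) of the theorem is about. The factor you actually need is of the form $1+O(\tfrac{n}{k}\epsilon)$, and I suspect you reversed a ratio. More importantly, even the corrected form is not obviously obtainable by ``tracking displacements'' in the buffered surplus-optimal mechanism: the top-inflation branch maps all quantiles in $[0,\epsilon]$ to $0$, so when the expected number $n\epsilon$ of inflated agents exceeds $k$, the mechanism is breaking ties among many seemingly-identical agents and the surplus loss is not cleanly bounded by a displacement count. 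You would need a careful argument here, and you do not give one.

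The paper sidesteps this entirely. Rather than buffering the surplus-optimal mechanism, it goes through the ex-ante relaxation and Yan's correlation-gap sequential posted-price mechanism $\GSP(\vec{p})$: it takes the ex-ante optimal $\vec{q}$, floors each coordinate to $p_i=\max\{q_i,\epsilon\}$ (so the interim allocation is automatically constant on $[0,\epsilon]$, since each agent simply faces a fixed posted price $\val_i(p_i)$), then shows via a greedy/truncation argument that $\sum_i p_i\le k+(n-k)\epsilon$ and hence $\EAR(\vec p)\le(1+[\tfrac nk-1]\epsilon)\,\EAR$, and finally invokes \Cref{thm:Yan} to get the surplus guarantee $\GSP(\vec p)\ge(1-\tfrac{1}{\sqrt{2\pi k}})\EAR(\vec p)$. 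This both produces the correct $1+O(\tfrac nk\epsilon)$ dependence and avoids any analysis of how top-inflation distorts the surplus-optimal allocation. The cost is the extra $(1-1/\sqrt{2\pi k})^{-1}$ correlation-gap factor, which is why the paper handles $k=\Theta(1)$ separately by falling back on \Cref{thm:Tight}; your sketch misses that case distinction.

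In short: your high-level plan and the $\epsilon$-tuning are right, but the mechanism you build and the $\beta(\epsilon)=1-O(k\epsilon)$ claim on which everything rests are not substantiated and, as stated, have the wrong scaling. To repair the argument you should either prove a correct $1+O(\tfrac nk\epsilon)$ buffering bound for $k$-identical goods (nontrivial), or replace the buffered surplus-optimal mechanism with a posted-price construction as in \Cref{thm:SeqNearOPT}.
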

To prove \Cref{thm:TractableTightkID}, we make use of the following lemma, whose proof we defer to the end of the section.
\begin{lemma} \label{thm:kID}
    Consider the special case with $n$ agents and $k$ identical goods. The optimal omniscient approximation $\apxcs(n) \leq \tfrac{1}{(1-\frac{1}{\sqrt{2 \pi k}})} \cdot (1 - \ln \epsilon)(1+[\frac{n}{k}-1] \, \epsilon)$ for any $\epsilon > 0$. 
\end{lemma}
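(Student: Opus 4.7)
The plan is to adapt the buffering argument of \Cref{thm:GeneralFeasibleMultiAgentEnvironments}, replacing its generic surplus-loss estimate (\Cref{thm:HTAPX}) with one tailored to the $k$-identical-goods structure. Let $M$ denote the mechanism obtained by applying top-inflate $\epsilon$-buffering (per \Cref{rem:BottomDeflate}) to the surplus optimal allocation rule. Explicitly, each agent $i$ with original quantile $q_i \in [0, \epsilon]$ is assigned effective quantile $0$, each agent with $q_i > \epsilon$ is assigned effective quantile $(q_i - \epsilon)/(1 - \epsilon)$, and the surplus-optimal rule is applied to the resulting effective values, breaking ties among agents at effective quantile $0$ uniformly at random.

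First, by construction the interim allocation rule of $M$ is constant on $[0, \epsilon]$ for every agent, so only the region $q \in [\epsilon, 1]$ contributes to the integration-by-parts expression for $\text{surplus}(M)$. Using the single-agent bound $\cumval_i(q) \leq (1 - \ln \epsilon) \, \icumutil_i(q)$ from \Cref{thm:Main} on this surviving region, together with the amortization lemma \Cref{thm:Amortize}, exactly as in the proof of \Cref{thm:GeneralFeasibleMultiAgentEnvironments}, gives
\[\text{surplus}(M) \,\leq\, (1 - \ln \epsilon) \cdot \OPTCS.\]
Second, I would prove the $k$-identical specific surplus-preservation bound
\[\OPTOM \,\leq\, \frac{1 + (\tfrac{n}{k} - 1)\epsilon}{1 - 1/\sqrt{2\pi k}} \cdot \text{surplus}(M),\]
and combining the two displays yields the lemma.

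The key observation for the second bound is that $M$ and the surplus-optimal mechanism agree on the event $\{X \leq k\}$, where $X \sim \mathrm{Binomial}(n, \epsilon)$ counts the inflated agents (those with $q_i \in [0, \epsilon]$): in that case both mechanisms select every inflated agent together with the $k - X$ remaining agents of highest value. The surplus loss is therefore confined to the event $\{X > k\}$, on which $M$ picks $k$ of the $X$ inflated agents uniformly at random while the surplus optimum picks the top $k$ of them by value.

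The main obstacle will be quantifying this conditional loss tightly. The $(1 + (n/k - 1)\epsilon)$ factor reflects the inflation-induced increase in each agent's ex-ante allocation probability from $k/n$ to at most $(k/n)(1 + (n/k - 1)\epsilon)$. The Stirling correction $1/(1 - 1/\sqrt{2\pi k})$ arises from a sharp estimate of the binomial probability at the mode, $\binom{n}{k} (k/n)^k (1 - k/n)^{n-k} \sim 1/\sqrt{2\pi k (1 - k/n)}$ by Stirling's formula, which controls the worst-case imbalance between a uniformly random and a value-sorted selection of $k$ items out of $X$. The analysis should mirror the per-agent amortization used in the general-feasibility case, but with this sharper mode-based bound replacing the crude factor $(1 - 2n\epsilon)$ appearing in \Cref{thm:HTAPX}.
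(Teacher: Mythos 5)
Your proposal follows the high-level two-part structure of the paper (build a mechanism whose interim allocations are constant on $[0,\epsilon]$, then chain the single-agent bound $\cumval(q) \leq (1-\ln\epsilon)\icumutil(q)$ on $[\epsilon,1]$ with a surplus-preservation bound), and your first inequality $\text{surplus}(M) \leq (1-\ln\epsilon)\,\OPTCS$ is exactly the paper's argument. But the mechanism you choose and the proof of the surplus-preservation bound are genuinely different from the paper's, and the latter has a real gap.

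The paper does not use $\epsilon$-buffering here at all. It instead constructs an explicit \emph{sequential posted-price} mechanism $\GSP(\vec{p})$ with $p_i = \max\{q_i^{\max}, \epsilon\}$, which automatically has the constant-on-$[0,\epsilon]$ property, and obtains the surplus-preservation factor by a truncation argument ($\sum p_i \leq k + (n-k)\epsilon$, giving the $1 + (\frac{n}{k}-1)\epsilon$ term) combined with \Cref{thm:Yan}, Yan's correlation-gap bound for $k$-unit sequential posted pricing, which supplies the $1/(1-\tfrac{1}{\sqrt{2\pi k}})$ term as a black box. Your proposal instead tries to sharpen the generic buffering bound of \Cref{thm:HTAPX} by a direct conditional analysis on $X \sim \mathrm{Binomial}(n,\epsilon)$.

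Two concrete problems with that route. First, the claim that the buffered mechanism and the surplus optimum agree on $\{X \leq k\}$ is only valid for identical distributions: for general product priors, the buffering rule operates in quantile space, and an inflated agent from a weak distribution may have a lower \emph{value} than an un-inflated agent from a strong distribution, so the two mechanisms can disagree even when $X \leq k$. Second, the derivation of the Stirling factor is not a proof: you invoke the mode probability of a $\mathrm{Binomial}(n, k/n)$ at $k$, but the relevant random variable is $\mathrm{Binomial}(n, \epsilon)$ (with $\epsilon$ eventually set to $\Theta(\frac{k}{n\ln(n/k)})$, so mean $n\epsilon \neq k$), and no argument is given connecting this mode estimate to a bound on $E[\OPTOM - \text{surplus}(M) \mid X > k]$, which can be large for heavy-tailed priors. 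The line ``the analysis should mirror the per-agent amortization\ldots with this sharper mode-based bound'' is precisely the part that needs to be a proof and is not. The paper avoids this difficulty entirely by delegating the hard concentration work to Yan's theorem, where the $1/(1-\tfrac{1}{\sqrt{2\pi k}})$ factor has a clean and tight analysis.
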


\Cref{thm:kID} is particularly useful when $k = \Theta(n)$. For a constant $\epsilon$, $\tfrac{1}{(1-\frac{1}{\sqrt{2 \pi k}})}$ is $(1 + o(1))$ while $(1 - \ln \epsilon)(1+[\frac{n}{k}-1] \, \epsilon)$ is a constant.

\begin{proof}[Proof of \Cref{thm:TractableTightkID}]
    We show the theorem separately for the three cases, when $k = \Theta(1)$, $k = \omega(1)$ but $o(n)$, and $k = \Theta(n)$.
    \begin{itemize}
        \item $k = \Theta(1)$: The proof is a direct consequence of \Cref{thm:Tight}, where we show $\apxcs(n) \leq (1 + o(1)) \ln n = (1 + o(1)) \ln \tfrac{n}{k}$. The equality follows since $k$ is a constant.
        \item $k = \omega(1)$, but $o(n)$: By setting $\epsilon = \frac{1}{\frac{n}{k} (1 + \ln \frac{n}{k})}$ in \Cref{thm:kID}, we get
        $$\apxcs(n) \leq \big(\tfrac{1}{1 - \frac{1}{\sqrt{2 \pi k}}}\big) \times (1 + \frac{1}{1 + \ln \frac{n}{k}} \, [1 - \frac{k}{n}]) \times (1 + \ln \tfrac{n}{k} + \ln (1 + \ln \tfrac{n}{k}))$$
        The first term is $(1 + o(1))$ since $k = \omega(1)$. The second and the third terms are $(1 + o(1))$ and $(1 + o(1)) \ln \frac{n}{k}$ respectively, since $k = o(n)$. Summarizing, $\apxcs(n) \leq (1 + o(1)) \ln \frac{n}{k}$.
        \end{itemize}
        The above two cases conclude the proof of the first part of the theorem. The next case clinches the proof of the second part.
        \begin{itemize}
        \item $k = \Theta(n)$: We set $\epsilon = \frac{1}{\frac{n}{k} (1 + \ln \frac{n}{k})}$ as in the previous case. Once again $\big(\tfrac{1}{1 - \frac{1}{\sqrt{2 \pi k}}}\big) = (1 + o(1))$ since $k = o(1)$. We plot the ratio between our upper bound and the lower bound $(0.577 + \ln \frac{n}{k})$ from \Cref{thm:LowerBound} as a function of $\tfrac{k}{n}$ in \Cref{fig:CompFigOPT} to observe that the ratio is at most $2.27$, exactly as needed. We skip a more detailed numerical analysis substantiating this observation. \qedhere
    \end{itemize}
\end{proof}

We modify the lower bound from \citet{QV23} sketched in \Cref{rem:Tight} to get an analogous result for the $k$-identical goods environment.

\begin{lemma} \label{thm:LowerBound}
    Consider the $n$-agent $k$-identical goods environment with values drawn IID from the exponential distribution with mean $1$. Assume, for convenience, that $\frac{n}{k}$ is an integer. The optimal omniscient approximation $\apxcs(n) \geq H_{\frac{n}{k}} \geq (0.577 + \ln \frac{n}{k})$.
\end{lemma}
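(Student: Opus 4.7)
The plan is to compute both the expected optimal surplus (omniscient benchmark) and the expected optimal consumer surplus exactly in this IID exponential setting, then bound the ratio by a harmonic-number telescoping argument. The whole calculation is tractable because the exponential distribution has the special property that its marginal price-posting consumer surplus is a constant: with $F(v) = 1 - e^{-v}$, the inverse demand is $\val(q) = -\ln q$, so $\util(q) = -q \val'(q) = 1$. This makes the distribution consumer surplus regular with trivial virtual values.

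First I would pin down the optimal consumer surplus. Since $\iutil_i(q) = \util_i(q) = 1$ for every agent, \Cref{thm:Amortize} (equivalently, the direct Myersonian argument) tells us that the consumer-surplus-optimal mechanism maximizes $E[\sum_i y_i(q_i) \cdot 1]$ subject to the $k$-goods feasibility constraint and monotonicity. The maximum is exactly $k$, attained by any mechanism that always allocates to $k$ of the $n$ agents (for instance, a mechanism that posts a price of $0$). So $\OPTCS = k$.

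Next I would compute the expected optimal surplus, namely the expectation of the sum of the top $k$ order statistics of $n$ IID unit exponentials. Using the classical identity $E[X^{(i)}] = H_n - H_{i-1}$ for the $i$-th largest value, together with $\sum_{i=1}^{k-1} H_i = k H_{k-1} - (k-1)$, summation gives
\[
\OPTOM \;=\; \sum_{i=1}^{k}(H_n - H_{i-1}) \;=\; k H_n - k H_{k-1} + (k-1).
\]
Dividing by $\OPTCS = k$ yields
\[
\apxcs(n) \;\geq\; \frac{\OPTOM}{\OPTCS} \;=\; H_n - H_{k-1} + 1 - \tfrac{1}{k}.
\]

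It remains to check that this expression is at least $H_{n/k}$. Writing $n = mk$ with $m = n/k$, I would split the sum $H_{mk} - H_{k-1} = \sum_{i=k}^{mk} \tfrac{1}{i}$ by peeling off the $i=k$ term and grouping the rest into $m-1$ blocks of $k$ consecutive indices $\{jk+1, \dots, (j+1)k\}$ for $j = 1, \dots, m-1$. Since each term in the $j$-th block is at least $\tfrac{1}{(j+1)k}$, that block contributes at least $\tfrac{1}{j+1}$, giving
\[
H_{mk} - H_{k-1} \;\geq\; \tfrac{1}{k} + \sum_{j=1}^{m-1}\tfrac{1}{j+1} \;=\; \tfrac{1}{k} + H_m - 1.
\]
Adding the leftover $1 - \tfrac{1}{k}$ produces exactly $H_m = H_{n/k}$. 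The second inequality of the lemma, $H_{n/k} \geq 0.577 + \ln(n/k)$, is the standard expansion $H_m = \ln m + \gamma + \Theta(1/m)$ with $\gamma > 0.577$, together with the fact that all the correction terms beyond $\gamma$ are positive. The only mildly delicate step is the block-grouping bound; everything else is book-keeping with harmonic numbers.
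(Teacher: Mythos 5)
Your proof is correct, but it takes a genuinely different route from the paper's. The paper lower-bounds the omniscient benchmark by exhibiting a \emph{suboptimal} mechanism: partition the $n$ agents into $k$ batches of size $n/k$ and, within each batch, award the good to the batch maximum. Each batch looks like a single-item exponential instance, contributing $H_{n/k}$ in expectation, so the total is $k H_{n/k}$; dividing by $\OPTCS = k$ gives $H_{n/k}$ immediately, with no order-statistic identities or harmonic-number inequalities needed. You instead compute $\OPTOM$ \emph{exactly} as the expected sum of the top $k$ order statistics, $k H_n - k H_{k-1} + (k-1)$, which requires the spacing identity $E[X^{(i)}] = H_n - H_{i-1}$ and the sum $\sum_{i=1}^{k-1} H_i = k H_{k-1} - (k-1)$, and then you need the block-grouping inequality $H_{mk} - H_{k-1} + 1 - \tfrac{1}{k} \geq H_m$ to reach the stated bound. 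Your route is heavier but actually yields the sharper exact ratio $H_n - H_{k-1} + 1 - \tfrac{1}{k}$, which the paper's batching construction does not; the paper's route buys brevity and avoids the telescoping verification. Both arrive at $\OPTCS = k$ the same way (constant virtual value $\util \equiv 1$ for Exp(1)). One small nit: the claim that ``all correction terms beyond $\gamma$ are positive'' in the expansion of $H_m$ is not literally how the asymptotic series behaves (it alternates); the cleaner justification for $H_m \geq \gamma + \ln m$ is that $H_m - \ln m$ decreases monotonically to $\gamma$, so it exceeds $\gamma > 0.577$ for every $m \geq 1$.
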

\begin{proof}
    The exponential distribution with mean $1$ has a constant marginal consumer surplus curve $\util(q) = 1$, and thus, the optimal consumer surplus mechanism allocates to some set of $k$ agents to get an expected consumer surplus equal to $k$.

    Now, consider the following mechanism. Group agents into $k$ batches of $\frac{n}{k}$ agents each, and award the agent with the highest value in each group. Each group is now identical to a single item environment with $\frac{n}{k}$ agents, and thus, generates an expected surplus of $H_{\frac{n}{k}}$ (see \Cref{rem:Tight} on the example sketched in \citealp{QV23}). The $k$ batches together produce an expected surplus equal to $k \cdot H_{\frac{n}{k}}$. The optimal mechanism creates a larger surplus and hence, $\apxcs(n)$ is at least $H_{\frac{n}{k}} \geq (0.577 + \ln \frac{n}{k})$.
\end{proof}

Thus, the upper bound from \Cref{thm:TractableTightkID} is tight up to $(1 + o(1))$ terms when $k = o(n)$ and up to a multiplicative $2.27 \, (1 + o(1))$ when $k = \Theta(n)$. The best known bound for the optimal omniscient approximation is due to \citet{HR08}, where they show $\apxcs(n) \leq \tfrac{2}{\ln 2} \, (\ln 2 + \ln \frac{n}{k}) \approx 2.88 (\ln 2 + \ln \frac{n}{k})$. Apart from closing the $\tfrac{2}{\ln 2}$ gap between their upper bound and the lower bound when $k = o(n)$, our bound is an improvement by a factor of at least $1.36 (1 - o(1))$ when $k = \Theta(n)$ (\Cref{fig:CompFigHR}).

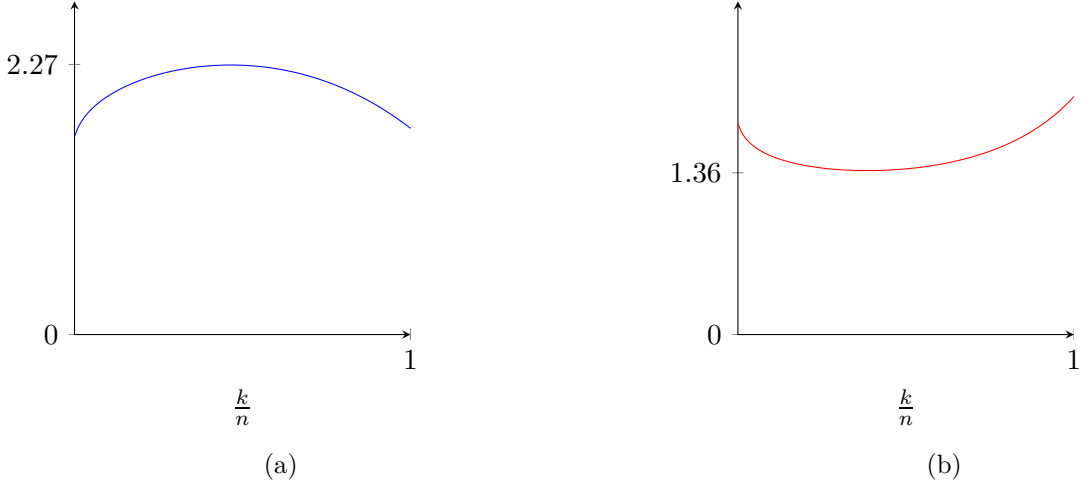
\begin{figure}
    \centering
    \begin{subfigure}[b]{0.45\textwidth}
\begin{tikzpicture}
\begin{axis}[
    xlabel=$\frac{k}{n}$,
    domain=0:1,
    xtick = {0, 1},
    ytick = {0, 2.27},
    ymin = 0,
    ymax = 2.8,
    samples=100,
    axis lines=left,
    clip=false,
    width=6cm,
    height=6cm,
    legend style={at={(0.95,0.95)}, anchor=north east}
]
\addplot[blue] {((1 + 1/(1 - ln(x))*(1-x))*(1-ln(x) + ln(1-ln(x)))) / (0.577 - ln(x))};
\end{axis}
\end{tikzpicture}
        \caption{}
        \label{fig:CompFigOPT}
    \end{subfigure}
    \hspace{1 cm}
    \begin{subfigure}[b]{0.45\textwidth}
\begin{tikzpicture}
\begin{axis}[
    xlabel=$\frac{k}{n}$,
    domain=0:1,
    xtick = {0, 1},
    ytick = {0, 1.36},
    ymin = 0,
    ymax = 2.8,
    samples=100,
    axis lines=left,
    clip=false,
    width=6cm,
    height=6cm,
    legend style={at={(0.95,0.95)}, anchor=north east}
]
\addplot[red] {((2/ln(2))*(ln(2)-ln(x))) / ((1 + 1/(1 - ln(x))*(1-x))*(1-ln(x) + ln(1-ln(x))))};
\end{axis}
\end{tikzpicture}
        \caption{}
        \label{fig:CompFigHR}
    \end{subfigure}
    \caption{We compare our upper bound against the lower bound and the best known upper bound when $k = \Theta(n)$. Figure (a) plots the ratio between our upper bound from \Cref{thm:TractableTightkID}, normalized by $(1 + o(1))$ and our lower bound from \Cref{thm:LowerBound}, $(0.577 + \ln \frac{n}{k})$. The maximum value of the curve is less than $2.27$, and hence, our upper bound can potentially be improved by a factor of at most $2.27 \, (1 + o(1))$. Figure (b) plots the ratio between the current best known bound $\tfrac{2}{\ln 2} \, (\ln 2 + \ln \frac{n}{k})$ and our bound from \Cref{thm:TractableTightkID}, again normalized by $(1 + o(1))$. The minimum value taken by the ratio is greater than $1.36$, and hence, our bound improves the best known bound by a factor of at least $1.36 (1 - o(1))$.}
    \label{fig:Comp}
\end{figure}

We conclude the section with the proof of \Cref{thm:kID}. We will adopt an approach similar to the proof of \Cref{thm:GeneralFeasibleMultiAgentEnvironments}. Increasing the allocation to the agents with a small quantile so that $y_i(q_i) = y_i(0)$ for $q_i \in [0, \epsilon]$ vastly increases the consumer surplus of $i$ without noticeably decreasing the surplus of the mechanism. We consider online posted-price mechanisms with the constraint $y_i'(q_i) = 0$ whenever $q_i \leq \epsilon$ for all agents $i$. Posted-price mechanisms are known to achieve a near optimal surplus and we bound the loss in surplus due to the additional constraint.

%We will find a mechanism such that, for every agent $i$, the interim allocation rule $y_i$ satisfies $y_i'(q_i) = 0$ for $q_i \in [0, \epsilon]$ and also obtains a near optimal surplus. We will propose an online price-posting mechanism which has the required property. In the online price-posting environment, the mechanism posts a price to each agent and determines whether to allocate to the agent before interacting with the next agent. For the purpose of this proof, we will let the mechanism decide the order in which it interacts with the agents (a.k.a, the sequential version).

Consistent with prior literature on online sequential price-posting \citep{CHMS10, Yan11} we will approximate the performance of the mechanism against the optimal ex-ante relaxed mechanism $\EAR$. In the ex-ante relaxation, the feasibility constraint is relaxed to bind ex-ante rather than ex-post. In other words, the expected number of agents served by the mechanism must be at most $k$. The optimal ex-ante mechanism obtains a larger surplus than $\OPTOM$, since $\OPTOM$ is also a feasible ex-ante mechanism. Note that, if agent $i$ is being allocated with probability $q_i$ by $\EAR$, it would allocate to the $q_i$ smallest quantiles (quantiles corresponding to the largest values) without violating the ex-ante constraint, thereby getting an expected surplus $\cumval_i(q_i)$ from agent $i$.

Let $\simplex$ be the set of all ex-ante feasible allocations $\vec{q} = (q_1, \dots, q_n)$ for selling up to $k$ items such that $q_1, \dots, q_n \in [0, 1]$ and $\sum_{i= 1}^n q_i \leq k$. Let $\EAR(\vec{q})$ be the ex-ante relaxed mechanism that allocates to the $q_i$ smallest quantiles of agent $i$. Note that $\EAR(\vec{q})$ is well-defined even if $\vec{q} \not \in \simplex$ as long as $q_1, \dots, q_n \in [0,1]$. We will abuse notation to denote the surplus from $\EAR(\vec{q})$ by $\EAR(\vec{q})$. Thus, $\EAR = \max_{\vec{q} \in \simplex} \EAR(\vec{q})$.

\citet{Yan11} showed that the following sequential posted-price mechanism $\GSP(\vec{q})$ has a surplus comparable to $\EAR(\vec{q})$ for all $\vec{q} \in \simplex$.
\begin{enumerate}
    \item Order agents in decreasing order of $\frac{\cumval_i(q_i)}{q_i}$, the expected surplus conditioned on selling to agent $i$.
    \item In the order above, offer agent $i$ a price $\val_i(q_i)$ until all $n$ agents interact with the mechanism or until all $k$ goods are sold.
\end{enumerate}
As usual, we will abuse notation to denote the surplus of $\GSP(\vec{q})$ by $\GSP(\vec{q})$.

\begin{theorem}[\citealp{Yan11}] \label{thm:Yan}
    For $\vec{q} \in \simplex$, $\tfrac{1}{(1 - \frac{1}{\sqrt{2 \pi k}})} \times \GSP(\vec{q}) \geq \EAR(\vec{q})$.
\end{theorem}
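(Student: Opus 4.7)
The plan is to decompose the expected surplus of $\GSP(\vec{q})$ agent-by-agent and reduce the comparison with $\EAR(\vec{q}) = \sum_i \cumval_i(q_i)$ to a tail estimate on sums of independent Bernoullis. Let $X_j \sim \text{Bernoulli}(q_j)$ indicate independently whether agent $j$ would accept its posted price $\val_j(q_j)$, and set $S_{i-1} = \sum_{j<i} X_j$. In the greedy run the price is actually offered to agent $i$ iff the first $i-1$ agents have not exhausted the $k$ goods, and an easy induction on $i$ shows this event coincides with $\{S_{i-1} < k\}$; conditional on being offered, the expected surplus contribution of agent $i$ is $\cumval_i(q_i)$. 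Hence
\begin{equation*}
\GSP(\vec{q}) \;=\; \sum_i \cumval_i(q_i)\,\Pr[S_{i-1} < k] \;=\; \EAR(\vec{q}) - \sum_i \cumval_i(q_i)\,\Pr[S_{i-1} \geq k],
\end{equation*}
so the theorem is equivalent to bounding the blocked-surplus term by $\EAR(\vec{q})/\sqrt{2\pi k}$.

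To exploit the greedy ordering I would apply a rearrangement step. Writing $\cumval_i(q_i) = r_i q_i$ with $r_i := \cumval_i(q_i)/q_i$ non-increasing in $i$ by construction of $\GSP$, and noting that $b_i := \Pr[S_{i-1} \geq k]$ is non-decreasing in $i$, the weighted Chebyshev sum inequality with weights $q_i$ (equivalently, non-positivity of covariance of oppositely-monotone sequences) yields
\begin{equation*}
\sum_i \cumval_i(q_i)\, b_i \;=\; \sum_i r_i q_i\, b_i \;\leq\; \frac{\sum_j r_j q_j}{\sum_j q_j}\cdot \sum_i q_i\, b_i \;=\; \frac{\EAR(\vec{q})}{\mu}\cdot \sum_i q_i\, b_i,
\end{equation*}
where $\mu := \sum_i q_i \leq k$ (since $\vec{q} \in \simplex$). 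The sum $\sum_i q_i b_i = \sum_i E[X_i\,\mathbf{1}[S_{i-1}\geq k]]$ telescopes: $X_i\,\mathbf{1}[S_{i-1}\geq k]$ marks exactly the $X_j$'s that fall strictly after the running sum first reaches $k$, so $\sum_i X_i\,\mathbf{1}[S_{i-1}\geq k] = (S_n - k)^+$ with $S_n = \sum_j X_j$. The task thus reduces to the distributional inequality $E[(S_n - k)^+] \leq \mu/\sqrt{2\pi k}$.

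The main technical obstacle, and the place where the constant $1/\sqrt{2\pi k}$ is pinned down, is this Poisson-binomial tail bound. I would prove it by domination in the convex order. A $\text{Bernoulli}(p)$ has the same mean as but strictly smaller variance than $\text{Poisson}(p)$ and is dominated by it in the convex order; since convex order is preserved under independent sums, $S_n$ is dominated by $\text{Poisson}(\mu)$, and applying this to the convex function $x \mapsto (x-k)^+$ gives $E[(S_n - k)^+] \leq h(\mu)$ where $h(\mu) := E[(\text{Poisson}(\mu) - k)^+]$. A termwise differentiation of the Poisson pmf gives $h'(\mu) = \Pr[\text{Poisson}(\mu) \geq k]$, non-decreasing in $\mu$, so $h$ is convex with $h(0) = 0$ and $h(\mu)/\mu$ is non-decreasing; over $\mu \in (0,k]$ it is therefore maximized at $\mu = k$. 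Finally, the integer-mean identity $h(k) = \tfrac{1}{2} E[|\text{Poisson}(k) - k|] = e^{-k} k^{k+1}/k!$ (Ramasubban's formula) combined with the Stirling lower bound $k! \geq \sqrt{2\pi k}\,(k/e)^k$ yields $h(k)/k \leq 1/\sqrt{2\pi k}$.

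Chaining the three inequalities gives $\sum_i \cumval_i(q_i)\Pr[S_{i-1} \geq k] \leq \EAR(\vec{q})/\sqrt{2\pi k}$, and rearranging produces $\GSP(\vec{q}) \geq (1 - 1/\sqrt{2\pi k})\,\EAR(\vec{q})$, which is equivalent to the stated bound. The reductions are simultaneously tight: the Chebyshev step, the convex-order step, and the Stirling estimate all equalize at the same extremal configuration, namely many agents with vanishing $q_j$'s summing to $k$, whose acceptance sum converges to $\text{Poisson}(k)$; this is what makes the constant $\tfrac{1}{\sqrt{2\pi k}}$ appear without slack.
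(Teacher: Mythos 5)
\Cref{thm:Yan} is not proved in the paper; it is cited from \citet{Yan11}, so there is no in-paper proof to compare to. Your derivation is correct and self-contained. The decomposition $\GSP(\vec q)=\EAR(\vec q)-\sum_i \cumval_i(q_i)\,\Pr[S_{i-1}\ge k]$ with the identification of the offer event as $\{S_{i-1}<k\}$ is right; the weighted Chebyshev rearrangement (weights $q_i$, the ratios $\cumval_i(q_i)/q_i$ non-increasing and $\Pr[S_{i-1}\ge k]$ non-decreasing along the greedy order) together with the telescoping identity $\sum_i X_i\,\mathbf{1}[S_{i-1}\ge k]=(S_n-k)^+$ correctly reduces the loss to $\tfrac{\EAR(\vec q)}{\mu}\,E[(S_n-k)^+]$; and the tail estimate is sound: Bernoulli$(p)$ is dominated by Poisson$(p)$ in the convex order by the cut criterion (equal means and the cdfs cross exactly once, at $t=1$, since $e^{-p}>1-p$), convex order is preserved under independent sums, the derivative identity $h'(\mu)=\Pr[\text{Poisson}(\mu)\ge k]$ checks out by reindexing and makes $h$ convex with $h(0)=0$, so $h(\mu)\le h(k)\,\mu/k$, and $h(k)=e^{-k}k^{k+1}/k!$ with Stirling's lower bound gives $h(k)/k\le 1/\sqrt{2\pi k}$. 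Chaining recovers the stated bound, and this is the standard route by which Yan's constant $1/\sqrt{2\pi k}$ is obtained; nothing to flag.
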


We propose a sequential-posted-price-mechanism with near optimal surplus and interim allocation rules satisfying the additional properties outlined above. For a vector $\vec{q} \in \simplex$, let the $\epsilon$-inflated sequential posted-price mechanism $\epsGSP{\vec{q}}$ denote $\GSP(\vec{p})$, where $p_i = \max \{q_i, \epsilon\}$. Observe that if $\epsGSP{\vec{q}}$ has an unsold good while interacting with agent $i$, it always sells the good if the agent has a quantile $q_i \leq \epsilon$. Thus, the interim allocation rule $y_i$ is a constant function in $[0, \epsilon]$ and $y_i'(q) = 0$ in this range. Next, we argue that $\epsGSP{\vec{q}}$ approximates the surplus of $\EAR(\vec{q})$.

\begin{lemma} \label{thm:epsSeqnearOPT}
    For all $\vec{q} \in \simplex$, $\EAR(\vec{q}) \leq \tfrac{(1+[\frac{n}{k}-1] \, \epsilon)}{(1-\frac{1}{\sqrt{2 \pi k}})} \epsGSP{\vec{q}}$
\end{lemma}
\begin{proof}
     The proof proceeds in two steps. First, for $\vec{p}$ with $p_i = \max \{q_i, \epsilon\}$, we show that $\vec{p}$ is not too far away from the set of ex-ante feasible allocations $\simplex$. We then show that the projection $\vec{t}$ of $\vec{p}$ back onto $\simplex$ satisfies $\EAR(\vec{q}) \leq \EAR(\vec{p}) \leq \tfrac{k+(n-k) \, \epsilon}{k} \times \EAR(\vec{t})$ and $\GSP(\vec{t}) \leq \GSP(\vec{p}) = \epsGSP{\vec{q}}$.

    \textbf{Step 1.} $\sum_{i = 1}^n p_i \leq k + (n-k) \, \epsilon$.

    Without loss of generality, we assume $\sum_{i = 1}^n q_i = k$. Increasing $q_i$ so that their sum equals $k$ would not decrease the value of $\sum_{i = 1}^n p_i$. The claim reduces to showing the maximum value of $\sum_{i=1}^n (p_i - q_1)$ is at most $(n - k) \, \epsilon$.
    
    Note that $p_i - q_i \leq \epsilon$, equality holding exactly when $q_i = 0$. Thus, whenever $q_i < \epsilon$, decreasing them to zero increases the value of $\sum_{i = 1}^n (p_i - q_i)$. However, this would contradict $\sum_{i = 1}^n q_i = k$. For $q_i \geq \epsilon$, $p_i - q_i = 0$. Increasing $q_i$ to $1$ whenever $q_i \geq \epsilon$ does not change the value of $\sum_{i =1}^n (p_i - q_i)$, but allows other values $q_j$ to be set to zero without violating $\sum_{i = 1}^n q_i = k$. A simple greedy argument would suggest that $\sum_{i = 1}^n (p_i - q_i)$ is maximized when $q_i = 1$ for $1 \leq i \leq k$ and $q_i = 0$ when $i > k$, in which case, $\sum_{i = 1}^n (p_i - q_i) < (n-k) \, \epsilon$.

    %We will argue that the sum $ k = \sum_{i = 1}^n q_i$ increases by at most $(n-k) \, \epsilon$ after updating $p_i = \max \{q_i, \epsilon\}$. The maximum increase occurs when $q_i = 0$ for $1 \leq i \leq n-k$ and $q_i = 1$ for $n-k < i \leq n$, where the sum increases exactly by $(n-k) \, \epsilon$. Note that only the agents with an allocation probability $q_i < \epsilon$ have their probability of allocation increased after updating $q_i$ to $p_i$. In order to maximize the increase in the sum, these agents should have an allocation probability equal to zero. On the other hand, agents with $q_i \geq \epsilon$ should have an allocation probability equal to $1$ in order to accommodate more number of agents with $q_j = 0$ while maintaining the sum of all allocation probabilities to be exactly $k$. A formal greedy argument can be made substantiating this intuition.

    \textbf{Step 2.} Order (and re-index) agents in decreasing order of $\cumval_i(p_i)$. Consider the ex-ante allocation $\tfrac{k}{k + (n-k) \, \epsilon} \cdot \EAR(\vec{p})$ that offers agent $i$ with $\tfrac{k}{k+(n-k) \, \epsilon}$ fraction of a good whenever $q_i \leq p_i$. This has a surplus $\tfrac{k}{k + (n-k) \, \epsilon} \cdot \EAR(\vec{p}) = \sum_{i=1}^n \tfrac{k}{k + (n-k) \, \epsilon} \cdot \cumval_i(p_i)$ and allocates up to $k$ units, since $\sum_{i = 1}^n p_i \leq k + (n-k) \, \epsilon$.  Let $\vec{t}$ be an alternate ex-ante feasible allocation that allocates $t_i = p_i$ for agents $1, \dots, m$ for some $m \leq n$ until all $k$ items are allocated. $\vec{t}$ redistributes goods from agents with a smaller surplus $\cumval_i(p_i)$ allocated by $\tfrac{k}{k + (n-k) \, \epsilon} \cdot \EAR(\vec{p})$ to agents with a larger $\cumval_i(p_i)$ to get a surplus $$\EAR(\vec{t}) = \sum_{i = 1}^m \cumval_i(p_i) \geq \sum_{i=1}^n \tfrac{k}{k + (n-k) \, \epsilon} \cdot \cumval_i(p_i) = \tfrac{k}{k + (n-k) \, \epsilon} \cdot \EAR(\vec{p})$$
    
    %Then,
    %$$\tfrac{k}{k + (n-k) \, \epsilon} \times \EAR(\vec{p}) \leq \EAR(\vec{t})$$
    
    %Let $n'$ be the smallest index such that $\sum_{i = 1}^{n'} p_i \geq k$. Set $t_i = p_i$ for $1 \leq i < n'$, $t_i = 0$ for $n'+1 \leq i \leq n$ and $t_{n'}$ such that $\sum_{i = 1}^{n'} t_i = k$. Denote the vector of truncated probabilities $(t_1, \dots, t_n)$ by $\vec{t}$. Then,
    %$$\EAR(\vec{p}) \leq \tfrac{k + (n-k) \, \epsilon}{k} \times \EAR(\vec{t})$$
    
    %$\tfrac{\EAR(\vec{p})}{k + (n-k) \, \epsilon} = \sum_{i = 1}^n \tfrac{p_i}{k + (n-k) \, \epsilon} \cdot \frac{\cumval(p_i)}{p_i}$ is the weighted sum of $\tfrac{\cumval(p_i)}{p_i}$ with weights $\tfrac{p_i}{k + (n-k) \, \epsilon}$, while $\tfrac{\EAR(\vec{t})}{k}$ is the weighted sum of $\tfrac{\cumval(t_i)}{t_i}$ with weights $\tfrac{t_i}{k}$. The total weights add up to at most $1$ in the former, while the sum of weights equals $1$ in the latter. Further, the weights are more concentrated on indices with a larger $\tfrac{\cumval(p_i)}{p_i}$ (i.e, smaller indices) in the latter. Hence, $\tfrac{\EAR(\vec{p})}{k + (n-k) \, \epsilon} \leq \tfrac{\EAR(\vec{t})}{k}$.

    \textbf{Step 3.}$\GSP(\vec{t}) \leq \GSP(\vec{p})$
    
    $\GSP{\vec{t}}$ allocates the first $m$ agents with probability $p_i$ and allocates the others with probability $0$. $\GSP(\vec{p})$ allocates the first $m$ agents with probability $p_i$ and further allocates any unsold good to the others. Thus, $\GSP(\vec{t}) \leq \GSP(\vec{p})$.
    
    %$\GSP(\vec{t})$ allocates the first $n'$ agents and allocates to the rest with probability zero. $\GSP(\vec{p})$ allocates to the first $n'$ agents with the same probabilities as $\GSP(\vec{t})$, and allocates with some more probability to the remaining agents. Thus, $\GSP(\vec{p})$ only generates more surplus than $\GSP(\vec{t})$.

    Summarizing,
    \begin{equation*}
        \notag
        \begin{split}
            \EAR(\vec{q}) &\leq \EAR(\vec{p}) \\
            &\leq \tfrac{k+(n-k) \, \epsilon}{k} \times \EAR(\vec{t}) =  (1 + [\frac{n}{k} - 1] \, \epsilon) \times \EAR(\vec{t})\\
            &\leq \frac{(1 + [\frac{n}{k} - 1] \, \epsilon)}{(1 - \frac{1}{\sqrt{2 \pi k}})} \times \GSP(\vec{t}) \\
            &\leq \frac{(1 + [\frac{n}{k} - 1] \, \epsilon)}{(1 - \frac{1}{\sqrt{2 \pi k}})} \times \GSP(\vec{p}) = \frac{(1 + [\frac{n}{k} - 1] \, \epsilon)}{(1 - \frac{1}{\sqrt{2 \pi k}})} \times \epsGSP{\vec{q}}
        \end{split}
    \end{equation*}
    The third inequality follows from \Cref{thm:Yan}
\end{proof}

\begin{proof}[Proof of \Cref{thm:kID}]
    Let $\vec{q} \in \simplex$ be such that $\EAR = \EAR(\vec{q})$. From the review on sequential-price-posting and \Cref{thm:epsSeqnearOPT},
    $$\OPTOM \leq \EAR = \EAR(\vec{q}) \leq \tfrac{(1+[\frac{n}{k}-1] \, \epsilon)}{(1-\frac{1}{\sqrt{2 \pi k}})} \times \epsGSP{\vec{q}}$$
    Further, $\epsGSP{\vec{q}}$ has a constant interim allocation rule for agent $i$ when its quantile $q_i$ is in the range $[0, \epsilon]$. Arguments identical to the proof of \Cref{thm:GeneralFeasibleMultiAgentEnvironments} can be used to conclude
    $$\apxcs(n) = \frac{\OPTOM}{\OPTCS} \leq \tfrac{(1+[\frac{n}{k}-1] \, \epsilon)}{(1-\frac{1}{\sqrt{2 \pi k}})} \cdot \frac{\epsGSP{\vec{q}}}{\OPTCS} \leq  \tfrac{1}{(1-\frac{1}{\sqrt{2 \pi k}})} \cdot (1 - \ln \epsilon)(1+[\frac{n}{k}-1] \, \epsilon)\qedhere$$
\end{proof}

\section{Pen Testing Corollaries from Deferred-Acceptance Mechanisms} \label{sec:DAAPrev}
There are many environments in which deferred-acceptance mechanisms are known to be good.  By \Cref{thm:IntroSurplusDA}, these imply good pen testing algorithms (up to an additional $\apxcs(n)$ factor in the omniscient approximation, i.e, the ratio of the performances of the omniscient algorithm that knows the amount of ink in each pen and our proposed pen testing algorithm). In this section, we discuss a few notable examples.    A summary of these results was given previously in \Cref{table:Summary}.

Deferred-acceptance mechanisms that achieve the ex-post surplus optimal outcome are known for various feasibility constraints. While the simple auction that uniformly increases the price until exactly $k$ bidders remain active is surplus optimal in the $k$-identical goods environment, \citet{MS14} and \citet{BdVSV11} generalize the mechanism for matroid feasibility constraints. We get the following corollaries from the above auctions.
\begin{corollary} \label{thm:MatroidVirtualSurplus}
    For a pen testing environment with a matroid feasibility constraint, there exists a pen testing algorithm with an omniscient approximation ratio $(1+o(1)) \ln n$.
\end{corollary}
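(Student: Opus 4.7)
The plan is to invoke the reduction framework of \Cref{thm:IntroSurplusDA} with the two ingredients it demands: a near-optimal deferred-acceptance mechanism for the surplus objective under the matroid constraint, and a bound on the optimal omniscient approximation $\apxcs(n)$ for multi-agent environments with that constraint. Both pieces are already available in the excerpt, so the argument is essentially a one-line combination.

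First I would cite the surplus-optimal deferred-acceptance mechanism for matroid environments from \citet{MS14} and \citet{BdVSV11}: the mechanism that uniformly raises a common clock price and drops any agent whose remaining rejection would still leave the active set spanning a maximum-rank independent set. This mechanism achieves the ex-post surplus optimum on matroids, so $\apxda(n)=1$. Next I would apply \Cref{thm:Tight}, which establishes $\apxcs(n) \leq (1+o(1))\ln n$ for any feasibility constraint on $n$ agents, including matroid constraints.

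Finally, I would feed this surplus-optimal deferred-acceptance mechanism $\DAOM$ into the virtual-pricing transformation of \Cref{def:VirtualWelfareImplementation} to obtain $\DACS$, and invoke \Cref{thm:IntroSurplusDA}. The resulting pen testing algorithm has omniscient approximation
\[
\apxom(n) \;=\; \apxda(n)\,\apxcs(n) \;\leq\; 1 \cdot (1+o(1))\ln n \;=\; (1+o(1))\ln n,
\]
as claimed.

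There is really no obstacle here; the corollary is a direct substitution into the framework. The only thing worth flagging is that the framework converts a deferred-acceptance auction on the analogous agent/value model into a pen testing algorithm by using the posted prices as test thresholds, as discussed immediately after \Cref{def:DA}, so the resulting algorithm is well-defined on pens even though the underlying auction is phrased in terms of agents and bids.
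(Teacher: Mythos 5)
Your proposal matches the paper's reasoning exactly: it combines the surplus-optimal deferred-acceptance mechanism for matroids of \citet{MS14} and \citet{BdVSV11} (so $\apxda(n)=1$) with the $(1+o(1))\ln n$ bound on $\apxcs(n)$ from \Cref{thm:Tight}, then invokes \Cref{thm:IntroSurplusDA}. No gaps; this is the intended one-line derivation.
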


\begin{corollary} \label{thm:kPens}
    When choosing any $k$ of the $n$ pens is feasible, there exists a pen testing algorithm with an omniscient approximation ratio $(1 + o(1)) \ln \frac{n}{k}$ when $k = o(n)$ and $2.27 \, (1 + o(1)) \, (0.577 + \ln \frac{n}{k})$ when $k = \Theta(n)$.
\end{corollary}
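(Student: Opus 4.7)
The plan is a direct application of the reduction framework in \Cref{thm:IntroSurplusDA}, which says that the omniscient approximation of the pen testing algorithm obtained via the virtual-pricing transformation is $\apxom(n) = \apxda(n) \cdot \apxcs(n)$. So I only need to plug in the right $\apxda(n)$ and $\apxcs(n)$ for the $k$-identical goods environment.

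First, I would invoke the surplus-optimal deferred-acceptance mechanism for $k$-identical goods described in the paragraph immediately preceding this corollary: uniformly raise a single price faced by all still-active agents, rejecting any agent whose value falls below the current price, until exactly $k$ agents remain; this is shown to be ex-post surplus optimal by \citet{MS14} (and also \citet{BdVSV11} in the matroid generalization). Hence $\apxda(n) = 1$ for this environment.

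Next, I would cite \Cref{thm:TractableTightkID}, which gives the tight bounds on the optimal omniscient approximation for the $k$-identical goods environment: $\apxcs(n) \leq (1+o(1)) \ln \tfrac{n}{k}$ when $k = o(n)$, and $\apxcs(n) \leq 2.27\,(1+o(1))\,(0.577 + \ln \tfrac{n}{k})$ when $k = \Theta(n)$.

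Finally, applying \Cref{thm:IntroSurplusDA} to the surplus-optimal DA mechanism above yields a pen testing algorithm with omniscient approximation $\apxom(n) = \apxda(n)\,\apxcs(n) = 1 \cdot \apxcs(n)$, matching the two bounds in the statement. There is no real obstacle here, since both ingredients are already in hand; the only thing to be slightly careful about is to note that the virtual-pricing transformation preserves the $\apxda = 1$ guarantee (this is bullet 2 in the discussion following \Cref{def:VirtualWelfareImplementation}), so the resulting pen testing algorithm is automatically exact against the standard benchmark and loses only the $\apxcs(n)$ factor against the omniscient benchmark.
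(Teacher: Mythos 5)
Your proposal is correct and follows exactly the route the paper intends: combine the surplus-optimal ($\apxda = 1$) deferred-acceptance mechanism for $k$-identical goods (\citealp{MS14}) with the bounds on $\apxcs(n)$ from \Cref{thm:TractableTightkID}, then apply \Cref{thm:IntroSurplusDA}. Nothing is missing.
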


\citet{DGT14} initiated the study of prior-free deferred-acceptance approximation mechanisms in environments where deferred-acceptance mechanisms are known to be suboptimal. Their bounds can be improved in settings like ours where prior distributions of the agents' values are known. For general downward-closed constraints with a prior distribution on values, \citet{FGGS22} give a $O(\log \log m)$ approximation to the optimal surplus, where $m$ is the number of maximal feasible sets.  Note that $m \leq 2^n$ (every subset of agents might be feasible) and hence, $\log \log m$ is at most $\log n$. Thus, we have a poly-logarithmic approximate pen testing algorithm for any downward-closed feasibility environment.

\begin{comment}
\begin{corollary} \label{thm:GeneralFeasibleAscending}
    For a pen testing environment with a downward-closed feasibility constraint, there exists an algorithm with an omniscient approximation ratio $O(\log^2 n)$.
\end{corollary}    
\end{comment}

Note, however, that the pen testing model is inherently downward-closed.  Specifically, suppose an algorithm wanted to select a set $P \subseteq \overline{P}$, but $P$ is not feasible while $\overline{P}$ is. Since all pens have non-negative residual ink, even the ones that failed their tests, there is no loss in selecting $\overline{P}$ instead of the set $P$. Thus, near optimal pen testing algorithms for downward-closed constraints can be extended to give near optimal algorithms for general combinatorial constraints, giving the same performance guarantee as the downward-closed environment.

\begin{comment}
The problem of designing pen testing algorithms for general combinatorial environments can be reduced to designing algorithms for downward-closed constraints. For a feasibility constraint $\feasibility$, construct its downward-closure $\dcfeasibility$ by adding all subsets of $p$ to the constraint for all $p \in \feasibility$. We have a $O(\log^2 n)$-approximate algorithm for the constraint $\dcfeasibility$. Let $p$ be the output of the algorithm. If $p$ is feasible ($p \in \feasibility$), we already have a $O(\log^2 n)$-approximate selection. Otherwise, $p$ was added to the downward closure of $\feasibility$ because of the existence of $\overline{p} \in \feasibility$ such that $p \subseteq \overline{p}$. Choose the set $\overline{p}$ instead. Since the ink levels are non-negative, the remaining ink in $\overline{p}$ is at least the remaining ink in $p$. Since the latter is a $O(\log^2 n)$-approximation, so is the former. 
\end{comment}
\begin{corollary} \label{thm:CombinatorialPenTesting}
    For any combinatorial pen testing environment, there exists a pen testing algorithm with an omniscient approximation ratio $O(\log n \log \log m) = O(\log^2 n)$.
\end{corollary}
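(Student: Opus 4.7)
The plan is to reduce the general combinatorial case to the downward-closed case, exploiting the observation already highlighted in the paragraph preceding the corollary: expended pens contribute zero ink, so padding a chosen subset with additional pens never decreases the objective. Concretely, given an arbitrary feasibility constraint $\feasibility \subseteq 2^N$, I would define its downward closure $\dcfeasibility = \{P : P \subseteq \overline{P} \text{ for some } \overline{P} \in \feasibility\}$, which is by construction downward-closed on the same ground set of $n$ pens.

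Next, I would apply the framework results already established for downward-closed environments. The deferred-acceptance mechanism of \citet{FGGS22} gives $\apxda(n) = O(\log n)$ for any downward-closed constraint, and \Cref{thm:Tight} gives $\apxcs(n) \leq (1+o(1)) \ln n$ for any feasibility constraint on $n$ agents. Plugging these into \Cref{thm:IntroSurplusDA} for the instance $(\{F_i\}_i, \dcfeasibility)$ yields a pen testing algorithm with omniscient approximation $O(\log n) \cdot (1+o(1))\ln n = O(\log^2 n)$ against the omniscient benchmark defined relative to $\dcfeasibility$.

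Finally, I would convert the output from an $\dcfeasibility$-feasible set into an $\feasibility$-feasible set without loss. If the algorithm returns $P \in \dcfeasibility$, then by definition of the downward closure there exists some $\overline{P} \in \feasibility$ with $P \subseteq \overline{P}$; the reduction outputs such a superset (which can be fixed in advance for each $P \in \dcfeasibility \setminus \feasibility$). Since $\util_j \geq 0$ for every pen $j$ (including pens whose tests failed, for which $\util_j = 0$), we have $\sum_{i \in \overline{P}} \util_i \geq \sum_{i \in P} \util_i$, so the objective value is preserved. Moreover, the omniscient benchmark for $\feasibility$ is dominated by that for $\dcfeasibility$ for exactly the same reason (any omniscient $\dcfeasibility$-solution can be padded to an $\feasibility$-solution of at least the same value), so both sides of the approximation ratio transfer, yielding the $O(\log^2 n)$ guarantee against the omniscient benchmark for the original constraint $\feasibility$.

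There is essentially no technical obstacle here; the subtlety worth checking is the benchmark equivalence, namely that the optimal omniscient value for $\feasibility$ coincides with that for $\dcfeasibility$. This again follows from non-negativity of the residual ink levels: any $\dcfeasibility$-maximizer can be extended to an $\feasibility$-set of weakly greater value, and any $\feasibility$-maximizer is trivially in $\dcfeasibility$. With that observation in hand, the corollary is immediate from \Cref{thm:IntroSurplusDA}, the bound of \citet{FGGS22}, and \Cref{thm:Tight}.
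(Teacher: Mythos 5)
Your proposal is correct and follows essentially the same reduction the paper sketches in the paragraph preceding the corollary: pass to the downward closure $\dcfeasibility$, apply \Cref{thm:IntroSurplusDA} with the $O(\log n)$-approximate deferred-acceptance mechanism of \citet{FGGS22} and \Cref{thm:Tight}, then pad the output back up to a set in $\feasibility$ using non-negativity of residual ink. Your extra care in noting that the omniscient benchmark for $\feasibility$ is dominated by (indeed equal to) that for $\dcfeasibility$ is a correct and slightly more explicit version of the same observation.
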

\begin{remark} \label{rem:GeneralFeasibilityAscendingAuctions}
    Note that this reduction cannot be used to design deferred-acceptance mechanisms for general feasibility constraints. Allocating to agents that have dropped out of the auction can make the mechanism non-truthful (i.e, staying active till the price reaches the agent's value and dropping out subsequently might not be in the agent's best interests).
\end{remark}

The bound for downward-closed and general combinatorial environments can be improved upon in special cases; for example, in \Cref{sec:Knapsack} we give a natural deferred-acceptance mechanism for knapsack constraints that is a 2-approximation. For this knapsack problem, the agents have sizes along with values and feasible subsets are precisely those with a total size at most the knapsack capacity.

\begin{corollary} \label{thm:Knapsack}
    For a pen testing environment with a knapsack feasibility constraint, there exists a pen testing algorithm with an omniscient approximation ratio $2\,(1+o(1)) \ln n$.
\end{corollary}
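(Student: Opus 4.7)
The plan is to apply the reduction framework of \Cref{thm:IntroSurplusDA} directly. That theorem tells us that if we have a deferred-acceptance mechanism for the analogous auction environment with standard approximation $\apxda(n)$, and the auction environment has optimal omniscient approximation $\apxcs(n)$, then we obtain a pen testing algorithm with omniscient approximation $\apxda(n) \cdot \apxcs(n)$. For the knapsack feasibility constraint, we already have both ingredients in hand: \Cref{thm:Tight} gives $\apxcs(n) \le (1+o(1)) \ln n$ for \emph{any} combinatorial constraint, and the deferred-acceptance mechanism constructed in \Cref{sec:Knapsack} achieves a $2$-approximation to the optimal surplus, so $\apxda(n) = 2$ in this setting.

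The first step is to take the $2$-approximate surplus deferred-acceptance mechanism $\DAOM$ for knapsack constraints from \Cref{sec:Knapsack} and apply the virtual-pricing transformation from \Cref{def:VirtualWelfareImplementation} to produce the companion mechanism $\DACS$. By bullet $2$ of the observations following that definition, $\DACS$ is a $2$-approximation to the optimal consumer surplus in the analogous auction environment, and by bullet $3$ it is itself a deferred-acceptance mechanism. The second step is to translate $\DACS$ into a pen testing algorithm by using its posted prices as ink-expenditure thresholds; this is the same translation used to prove \Cref{thm:IntroSurplusDA} and preserves the consumer-surplus guarantee as residual ink.

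Combining via the chain of inequalities in the proof of \Cref{thm:IntroSurplusDA} gives
\[
\DACS \;\ge\; \tfrac{\OPTCS}{\apxda(n)} \;\ge\; \tfrac{\OPTOM}{\apxda(n)\,\apxcs(n)} \;\ge\; \tfrac{\OPTOM}{2\,(1+o(1)) \ln n},
\]
which is exactly the claimed omniscient approximation ratio of $2\,(1+o(1)) \ln n$. There is essentially no obstacle here beyond assembling the two cited ingredients, since all of the heavy lifting — the tight analysis of $\apxcs(n)$ in \Cref{sec:Multi} and the construction and analysis of the $2$-approximate knapsack deferred-acceptance mechanism in \Cref{sec:Knapsack} — is done elsewhere. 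The only thing to verify is that the knapsack feasibility constraint is a legitimate combinatorial constraint so that \Cref{thm:Tight} applies, which is immediate.
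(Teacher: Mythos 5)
Your proposal is correct and takes exactly the same route as the paper: it combines the $2$-approximate knapsack deferred-acceptance mechanism of \Cref{sec:Knapsack}, the bound $\apxcs(n)\le(1+o(1))\ln n$ from \Cref{thm:Tight}, and the reduction in \Cref{thm:IntroSurplusDA} via the virtual-pricing transformation of \Cref{def:VirtualWelfareImplementation}. The paper presents this corollary as an immediate consequence of those same three ingredients, so there is no meaningful difference.
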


Online pricing mechanisms are a special case of deferred-acceptance mechanisms; thus, \Cref{thm:IntroSurplusDA} converts online pricing mechanisms with good surplus into good online pen testing algorithms.  For online pen testing problems, every pen can be tested exactly once and must either be selected or discarded immediately after testing and before testing another pen. The algorithm might have the ability to determine the order of testing pens (the sequential version) or the order might be adversarially determined (the oblivious version).  

For the sequential pricing problem, \citet{CHMS10} gave an $\frac{e}{e-1}$-approximation mechanism for the single-item environment.  \citet{Yan11} connected the sequential pricing problem to the correlation gap and generalized the $\frac{e}{e-1}$-approximation to matroid environments.  Both of these papers considered optimizing revenue, but the results can be easily adapted to optimize surplus.

\begin{corollary} \label{thm:PenCorrelationGap}
    In the online sequential pen testing problem with a matroid feasibility constraint, there exists a pen testing algorithm that achieves an omniscient approximation ratio $\frac{e}{e-1} \cdot (1+o(1)) \ln n$.
\end{corollary}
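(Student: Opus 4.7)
The plan is to apply the reduction framework of \Cref{thm:IntroSurplusDA} with the sequential posted-price mechanism of \citet{Yan11} as the underlying deferred-acceptance mechanism. That mechanism---already reviewed in the proof of \Cref{thm:SeqNearOPT}---orders agents by $\cumval_i(q_i)/q_i$, posts a single non-decreasing price $\val_i(q_i)$ to each in turn, and admits the agent whenever the price is met and feasibility is preserved. For a matroid feasibility constraint, this yields a $\tfrac{e}{e-1}$-approximation to the optimal surplus. Because only one monotone price is ever quoted to each agent and the stopping condition is matroid independence, this is a (randomized) deferred-acceptance mechanism in the sense of \Cref{def:DA}, and in particular it is online sequential: at the moment of interacting with an agent, the algorithm commits to accepting or rejecting before moving on.

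Next I would apply the virtual-pricing transformation of \Cref{def:VirtualWelfareImplementation} to Yan's mechanism to obtain $\DACS$. The transformation is agent-local: whenever the original mechanism would post $\hat v_i$, the transformed one posts $\val_i(\theta_i)$ with $\theta_i = \sup\{\theta : \iutil_i(\theta) \geq \hat v_i\}$. Since $\theta_i$ is determined from the inverse demand curve of agent $i$ and the history already available at that stage, the interaction order and the online sequential structure are preserved. By bullet $2$ of the discussion following \Cref{def:VirtualWelfareImplementation}, $\DACS$ is a $\tfrac{e}{e-1}$-approximation to the optimal consumer surplus and is itself a deferred-acceptance mechanism.

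The final step is to convert $\DACS$ into an online sequential pen testing algorithm: traverse pens in the same order that $\DACS$ interacts with agents, test pen $i$ with threshold equal to the ink analogue of the posted price, select the pen if the test succeeds and adding it keeps the running set matroid-independent, and discard it otherwise. Combining the $\tfrac{e}{e-1}$ consumer-surplus approximation with the bound $\apxcs(n) \leq (1+o(1)) \ln n$ from \Cref{thm:Tight} via \Cref{thm:IntroSurplusDA} gives omniscient approximation $\apxom(n) \leq \tfrac{e}{e-1}\cdot(1+o(1))\ln n$, as required.

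The main obstacle, such as it is, is verifying that the virtual-pricing transformation really preserves the online sequential character of Yan's mechanism; this boils down to checking that the transformed threshold for each agent is computable from the fixed inverse demand curves and the history up to that agent, which it is by construction. Everything else is a direct invocation of the framework established earlier.
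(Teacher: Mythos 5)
Your proposal is correct and matches the paper's intended route: the corollary is stated as a direct consequence of \Cref{thm:IntroSurplusDA}, with $\apxda(n) = \tfrac{e}{e-1}$ supplied by Yan's sequential posted-price mechanism for matroids and $\apxcs(n) \leq (1+o(1))\ln n$ supplied by \Cref{thm:Tight}. Your additional care in checking that the virtual-pricing transformation preserves the online sequential structure is a worthwhile detail the paper leaves implicit.
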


\citet{HKS07} and \cite{CHMS10} adapt the prophet inequality of \citet{ESC84} to give a non-adaptive $2$-approximation for the oblivious price posting problem with $1$ good.  \citet{KW12} extend this 2-approximation to matroid environments and any arrival order of agents but with adaptive prices.

\begin{corollary} \label{thm:PenProphet}
    In the online oblivious pen testing problem to select a single pen, there exists a non-adaptive pen testing algorithm that achieves an omniscient approximation ratio ${2(1 + o(1)) \ln n}$.    
\end{corollary}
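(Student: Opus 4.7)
The plan is to instantiate the reduction of Theorem~\ref{thm:IntroSurplusDA} with two ingredients: the classical single-item prophet inequality of \citet{ESC84}, repackaged as a posted-price mechanism by \citet{HKS07} and \citet{CHMS10}, and the general bound $\apxcs(n) \le (1+o(1)) \ln n$ from Corollary~\ref{thm:Tight}. The prophet inequality provides a non-adaptive online pricing mechanism for the single-item adversarial-arrival setting whose expected surplus is at least half the expected offline optimum $E[\max_i \val_i]$; call this mechanism $\DAOM$, so that $\apxda(n) \le 2$.

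First I would argue that $\DAOM$ actually fits the deferred-acceptance template of Definition~\ref{def:DA}. Each agent $i$ is assigned a fixed threshold $\hat v_i$ that depends only on the prior (not on observed history), and the item is awarded to the first agent, in arrival order, whose value meets their threshold. Equivalently, at stage $i$ the mechanism raises agent $i$'s price from zero to $\hat v_i$, keeping them active iff $\val_i \ge \hat v_i$, and terminates at the first stage in which the active set is a singleton (or after the last agent). Prices are monotone non-decreasing per agent, so this is a legal (possibly randomized) DA.

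Next I would apply the virtual-pricing transformation of Definition~\ref{def:VirtualWelfareImplementation} to obtain $\DACS$. By bullet~2 of the discussion following that definition, the consumer surplus of $\DACS$ is a $2$-approximation to $\OPTCS$. Crucially, non-adaptivity is preserved: since $\hat v_i$ does not depend on history, neither does the quantile $\theta_i = \sup\{\theta : \iutil_i(\theta) \ge \hat v_i\}$ nor the posted price $\val_i(\theta_i)$, so $\DACS$ uses a fixed per-agent price. Converting $\DACS$ into a pen testing algorithm as in the proof of Theorem~\ref{thm:IntroSurplusDA} then yields a non-adaptive algorithm whose residual ink equals the consumer surplus of $\DACS$. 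Combining the two factors gives an omniscient approximation of at most $\apxda(n) \cdot \apxcs(n) \le 2 \, (1 + o(1)) \ln n$.

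The main (and only) obstacle is the first step: prophet-inequality mechanisms are usually described as online posted-price mechanisms rather than as ascending-price auctions, so one must verify cleanly that they fit Definition~\ref{def:DA}, in particular that a stage in which a single agent's price jumps from zero to their threshold is a legal DA stage. Once that sanity check is done, the corollary is a direct plug-and-play application of the reduction framework.
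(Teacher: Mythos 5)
Your proof is correct and mirrors the paper's implicit derivation: instantiate Theorem~\ref{thm:IntroSurplusDA} with the non-adaptive $2$-approximate prophet-inequality posted-price mechanism (\citealp{ESC84,HKS07,CHMS10}) as $\DAOM$ and combine with $\apxcs(n) \le (1+o(1))\ln n$ from Corollary~\ref{thm:Tight}. Your caveat about fitting posted prices into Definition~\ref{def:DA} is the right one to raise --- to be fully careful, once some agent accepts, the DA implementation must also raise the remaining agents' prices to $\infty$ so the active set actually shrinks to a feasible singleton, rather than ``terminating at the first stage in which the active set is a singleton'' as you wrote, since not-yet-reached agents are still active at price zero --- but this is a formality, and your observation that non-adaptivity survives the virtual-pricing transformation (because $\theta_i$ depends only on $\hat v_i$ and $F_i$) is exactly what justifies the ``non-adaptive'' claim in the corollary.
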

\begin{corollary} \label{thm:MatroidPenProphet}
    In the online oblivious pen testing problem with a matroid feasibility constraint, there exists a pen testing algorithm that achieves an omniscient approximation ratio $2(1 + o(1)) \ln n$.    
\end{corollary}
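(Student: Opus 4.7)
The plan is to instantiate the reduction framework of \Cref{thm:IntroSurplusDA} with two ingredients already established above: the matroid prophet inequality of \citet{KW12} as the base deferred-acceptance mechanism, and the universal bound $\apxcs(n) \leq (1+o(1)) \ln n$ from \Cref{thm:Tight}.

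First, I would argue that the Kleinberg--Weinberg mechanism qualifies as a deferred-acceptance mechanism in the sense of \Cref{def:DA}. Their algorithm posts an adaptive threshold to each arriving agent and accepts the agent into the selected set only if their value meets the threshold and augmenting the current set keeps it independent in the matroid; crucially, the prices they post are non-decreasing in the available history, so the mechanism fits the deferred-acceptance template (and handles arbitrary, i.e., oblivious, arrival orders). From their analysis, this mechanism is a $2$-approximation to the expected optimal surplus, giving $\apxda(n) = 2$.

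Second, I would apply the virtual-pricing transformation of \Cref{def:VirtualWelfareImplementation} to this mechanism. Since the transformation acts on each agent independently by substituting price $\val_i(\theta_i)$ for a virtual threshold $\hat v_i$ via the ironed virtual-value function $\iutil_i$, it preserves the online oblivious structure: the transformed mechanism still interacts with agents in the adversarially chosen order and still makes an irrevocable accept/reject decision per agent based on whether the posted price is met. By the observations following \Cref{def:VirtualWelfareImplementation}, the transformed mechanism is a deferred-acceptance mechanism whose expected consumer surplus is within a factor $\apxda(n) = 2$ of the optimal consumer surplus. Invoking the analogy between consumer surplus and residual ink (Section~2), the corresponding pen testing algorithm sets test durations equal to the prices of the transformed mechanism and selects the accepted pens; it is a $2$-approximation to the standard benchmark, and by \Cref{thm:IntroSurplusDA} a $\apxda(n)\,\apxcs(n) = 2(1+o(1))\ln n$-approximation to the omniscient benchmark.

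I anticipate the main thing to verify carefully is that the virtual-pricing transformation does not break the online oblivious property of the underlying mechanism. This reduces to checking that the computation of $\theta_i = \sup\{\theta : \iutil_i(\theta) \geq \hat v_i\}$ depends only on agent $i$'s own distribution and on the current threshold the adaptive mechanism assigns to agent $i$, both of which are available at the moment the algorithm must commit to pen $i$. Once this is confirmed, the corollary follows directly from \Cref{thm:IntroSurplusDA} and \Cref{thm:Tight}.
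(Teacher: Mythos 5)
Your proposal is correct and matches the paper's intended argument: the corollary is stated without an explicit proof, but it follows exactly as you describe by instantiating \Cref{thm:IntroSurplusDA} with the \citet{KW12} matroid prophet inequality as the $\apxda(n)=2$ deferred-acceptance mechanism and combining with the $\apxcs(n) \leq (1+o(1))\ln n$ bound from \Cref{thm:Tight}. Your additional check that the virtual-pricing transformation depends only on agent $i$'s own distribution, and hence preserves the online oblivious structure, is a valid and worthwhile point to make explicit even though the paper leaves it implicit.
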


The bounds of \Cref{thm:PenCorrelationGap} and \Cref{thm:PenProphet} improve on the online pen testing bounds of \citet{QV23}.

\section{The Online IID Environment} 
\label{sec:IID}

Consider the special case of the online pen testing problem where the ink levels are drawn independently from the same distribution. We move away from the reduction framework of \Cref{thm:IntroSurplusDA} to obtain a better bound for this environment.

\begin{theorem} \label{thm:IID}
In online single-item environments with IID agents, there exists a price-posting strategy with an omniscient approximation at most $(1 + o(1)) \ln n$. Equivalently, in the online pen testing problem with IID ink levels, there exists an algorithm that achieves $\pi(n) \leq {(1 + o(1)) \ln n}$.
\end{theorem}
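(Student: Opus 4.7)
The plan is to design an explicit sequential posted-price mechanism and bound its omniscient approximation via a linear-programming-style extreme-ray argument over the cone of non-negative non-increasing marginal consumer surplus curves. For $t = 1, \ldots, n$, the algorithm offers the $t$-th arriving pen the threshold corresponding to quantile $q_t = 1/(n - t + 1)$, implemented through the virtual-pricing transformation of \Cref{def:VirtualWelfareImplementation} so that the ironed virtual value $\iutil$ effectively replaces $\util$ in the consumer surplus computation (by \Cref{thm:Amortize}). The product $\prod_{s < t}(1 - 1/(n - s + 1))$ telescopes to $(n - t + 1)/n$, so each pen is reached with probability $(n - t + 1)/n$; multiplied by the posted-price consumer surplus $\icumutil(1/(n - t + 1))$ and summed over $t$, the expected consumer surplus is $\tfrac{1}{n}\sum_{k=1}^n k\,\icumutil(1/k)$. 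For the exponential distribution ($\icumutil(q) = q$), this sum equals $1$ while $\OPTOM = H_n$, matching the \citet{QV23} lower bound exactly.

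Next, write the benchmark in quantile space. Using $v(q) = \int_q^1 \util(t)/t\,dt$ (from $\util(t) = -t v'(t)$ and $v(1) = 0$) and swapping integrals yields $\OPTOM = \int_0^1 \tfrac{\util(t)}{t}(1 - (1-t)^n)\,dt$. Applying \Cref{thm:IncRSCurve} to $\cumutil$ and $\icumutil$ gives $\icumval \geq \cumval$ pointwise, so $\OPTOM$ is upper bounded by the same integral with $\iutil$ in place of $\util$. It therefore suffices to prove
\[
\int_0^1 \frac{\iutil(t)}{t}\bigl(1 - (1-t)^n\bigr)\,dt \;\leq\; H_n \cdot \frac{1}{n}\sum_{k=1}^n k\,\icumutil(1/k)
\]
for every non-negative non-increasing $\iutil$, since $H_n = (1+o(1))\ln n$ and both sides are linear functionals of $\iutil$.

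Every non-negative non-increasing function on $[0, 1]$ is a non-negative combination of indicators $\mathbb{I}_{[0, q_0]}$ (via its superlevel sets), so by linearity the worst-case ratio is attained at $\iutil = \mathbb{I}_{[0, q_0]}$ for some $q_0 \in (0, 1]$. The geometric-series identity $(1-(1-t)^n)/t = \sum_{j=0}^{n-1}(1-t)^j$ evaluates the two sides at this input to $A(q_0) := \sum_{j=1}^n (1-(1-q_0)^j)/j$ and $H_n B(q_0)$ with $n B(q_0) := \sum_{k=1}^n \min(1, kq_0)$, reducing the theorem to the one-parameter inequality $A(q_0) \leq H_n B(q_0)$ on $(0, 1]$. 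The main obstacle is this verification: the two sides are equal at $q_0 = 1$ (corresponding to the exponential distribution, hence tightness), and at each breakpoint $q_0 = 1/m$ the claim reduces via $\sum_{j \geq 1} x^j/j = -\ln(1-x)$ to the elementary bound $\ln m \geq H_n(m-1)/(2n)$ for $m \in \{1, \ldots, n\}$. Combining this with the concavity of $A$ (its derivative $(1-(1-q_0)^n)/q_0$ is decreasing), the piecewise linearity of $B$ on each interval $[1/(m+1), 1/m]$, and a direct sign analysis of $A' - H_n B'$ on these intervals extends the inequality throughout $(0, 1]$ and yields $\pi(n) \leq H_n = (1+o(1))\ln n$.
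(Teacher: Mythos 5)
Your construction is genuinely different from the paper's. The paper posts a single anonymous price $\val(q)$ to all agents, obtaining consumer surplus $\frac{1-(1-q)^n}{q}\cumutil(q)$, and solves the extremal problem by observing that $C(q) = q/(1-(1-q)^n)$ is convex so that the worst-case concave $\icumutil$ is a tangent line (hence affine, $\cumutil(q)=q+a$), leading to the bound $H_n+1$. You instead post quantiles $q_t = 1/(n-t+1)$ sequentially, exploit the telescoping product to obtain the clean formula $\tfrac{1}{n}\sum_{k=1}^n k\,\icumutil(1/k)$ for the expected consumer surplus, and decompose $\iutil$ over its superlevel sets so that the worst case lies on an indicator. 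Everything through the reduction to $A(q_0) \leq H_n B(q_0)$ is sound (the randomized posted prices implied by ironing preserve the telescoping because the ex-ante allocation probability at stage $t$ is still exactly $q_t$), and if the final inequality held you would get the constant $H_n$ exactly, slightly sharper than the paper's $H_n+1$.

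The verification step, however, contains a direction error and an incomplete argument. At $q_0=1/m$, writing out $A(1/m)=H_n-\sum_{j=1}^n (1-1/m)^j/j$ and $H_nB(1/m)=H_n-\tfrac{H_n(m-1)}{2n}$ shows the needed inequality is $\sum_{j=1}^n (1-1/m)^j/j \geq \tfrac{H_n(m-1)}{2n}$, a \emph{lower} bound on the truncated Mercator series. That series is at most $\ln m$ (it is the full series $-\ln(1-(1-1/m))=\ln m$ with the positive tail removed), so establishing $\ln m \geq \tfrac{H_n(m-1)}{2n}$ bounds the left side from the wrong side and proves nothing; you would have to lower-bound the finite sum, e.g.\ by controlling the tail $\sum_{j>n}(1-1/m)^j/j$, which is $\Theta(1/e)$ when $m=n$ and so leaves no room to spare for small $n$. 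Separately, on each interval $[1/(m+1),1/m]$ the difference $H_nB - A$ is convex (linear minus concave), so nonnegativity at the two breakpoints does not rule out a negative interior minimum; the "direct sign analysis of $A'-H_nB'$" would need to show the derivative does not change sign from $-$ to $+$ inside the interval, and this is asserted rather than carried out. Either gap could plausibly be closed (numerics suggest $A\leq H_nB$ is true), but as written the proof does not go through; a safer fallback is to weaken the target to $A(q_0)\leq (H_n+1)B(q_0)$ or to adopt the paper's tangent-line argument for the extremal $\icumutil$.
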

First, note that the above theorem is tight. In \Cref{rem:Tight}, we saw an instance (originally from \citealp{QV23}) with $n$ IID agents whose values are drawn from the exponential distribution with consumer surplus only an $H_n$ approximation to the optimal surplus. The optimal online consumer surplus cannot do better. Hence, the $(1+o(1)) \ln n$ approximation ratio in \Cref{thm:IID} is tight.

Before proving \Cref{thm:IID}, we briefly discuss the guarantee obtained through the reduction framework of \Cref{thm:IntroSurplusDA}. For the online environment with IID agents, it is known that $\apxda(n) = \frac{e}{e-1}$ \citep{CHMS10}. Thus, the framework would give us an $\frac{e}{e-1} \cdot (1+o(1)) \ln n$ bound. The direct analysis below gives a tight $H_n + 1 = (1 + o(1)) \ln n$ bound up to an additive factor $1$.

\begin{proof}[Proof of \Cref{thm:IID}]
    Similar to the proof of \Cref{thm:Main}, we will phrase our program as an optimization problem and compute the optimal solution.
    
    The surplus optimal auction always allocates the good to the agent with the highest value (smallest quantile). Let $\val_{max}$ be the random variable denoting the value of the winner.
    $$Pr(\val_{\max} \leq \val(t)) = (1-t)^n$$
    Thus, the expected surplus of the surplus optimal mechanism equals
    \begin{align}
        \int_0^1 \val(t) \times n(1-t)^{n-1} dt &= \int_0^1 \int_t^1 \frac{\util(r)}{r} dr \times n(1-t)^{n-1} dt \nonumber \\
        &= \int_0^1 \frac{\util(t)}{t} dt - \int_0^1 \frac{\util(t)}{t}(1-t)^n dt \nonumber \\
        &= \int_0^1 \frac{1-(1-t)^n}{t} \util(t) dt \nonumber \\
        &= \cumutil(1) + \int_0^1 \frac{1-(1-t)^n - nt(1-t)^{n-1}}{t^2} \cumutil(t) dt \nonumber
    \end{align}
    The first equality follows by integrating from $\util(t) = -t\val'(t)$ and $\val(1) = 0$. We integrate by parts for the second and fourth equation.
    
    Next, we look at the consumer surplus through price-posting. For a price $\val(q)$, we allocate the good if at least one agent has a quantile in $[0, q]$. Conditioned on selling the good, we get an expected consumer surplus of $\frac{\cumutil(q)}{q}$. Thus, the expected consumer surplus equals
    $$\frac{1-(1-q)^n}{q} \times \cumutil(q)$$
    As before, we first show \Cref{thm:IID} for consumer surplus regular distributions. Without loss of generality, assume that the maximum achievable consumer surplus through anonymous price posting is at most $1$. We want to find the consumer surplus regular distribution with the largest surplus.
    $$\max \cumutil(1) + \int_0^1 \frac{1-(1-t)^n - nt(1-t)^{n-1}}{t^2} \cumutil(t) dt$$
    subject to
    $$\frac{1-(1-q)^n}{q} \times \cumutil(q) \leq 1 \text{ for all } q \in [0, 1]$$
    $$\cumutil \text{ is concave}$$
    First, observe that $C(q) = \frac{q}{1-(1-q)^n}$ is convex (see \Cref{thm:CisConvex} in \Cref{sec:Omitted}). We want to fit a concave curve $\cumutil$ under the convex curve $C$. Let $\overline{C}_{q}$ be the tangent to $C$ at $q$. $\overline{C}_{\hat{q}}$ is a feasible solution for $\cumutil$ for all $\hat{q} \in [0, 1]$. Further, given that $\cumutil$ touches the curve $C$ at $\hat{q}$, $\cumutil(t) \leq \overline{C}_{\hat{q}}(t)$ for all $t \in [0, 1]$ (from the concavity of $\cumutil$). Also, it is clear that the objective increases with a pointwise increase in $\cumutil$. Thus, we conclude that the optimal solution to the program is achieved at $\cumutil = \overline{C}_{\hat{q}}$ at some $\hat{q} \in [0, 1]$. In particular, the worst-case ratio occurs when $\cumutil$ is a straight-line.
    
    Without loss of generality, we can normalize the slope to $1$. Let $\cumutil(q) = q + a$. The surplus equals
    \begin{equation}
        \notag
        \begin{split}
            1 + a + \int_0^1 \frac{1-(1-t)^n - nt(1-t)^{n-1}}{t} dt + a \times \int_0^1 \frac{1-(1-t)^n - nt(1-t)^{n-1}}{t^2} dt &= H_n + an
        \end{split}
    \end{equation}
    See \Cref{thm:IntegralEquality} in \Cref{sec:Omitted} for the proof of the above equality.
    
    By posting a price $\val(q)$, we generate a consumer surplus
    $$\frac{1-(1-q)^n}{q} \times \cumutil(q) = (1 + (1-q) + \dots + (1-q)^{n-1}) \times (q + a)$$
    If $a \geq \frac{1}{n}$, the ratio between the surplus and the consumer surplus is at most $H_n + 1$ by setting $q = 0$. If $a < \frac{1}{n}$, the surplus is less than $H_n + 1$, and by setting $q = 1$, the consumer surplus is at least $1$. Thus, the worst-case ratio between surplus and consumer surplus is at most $H_n + 1 = (1 + o(1)) \ln n$.
    
    We now show the result for all distributions. We rewrite the optimal surplus as follows.
    $$\int_0^1 \val(t) \times n(1-t)^{n-1} = \Bigr[\cumval(t) \times n (1-t)^{n-1}\Bigl]_{t=0}^{t=1} + \int_0^1 \cumval(t) \times n(n-1)(1-t)^{n-2}dt$$
    The equality follows through integration by parts. From the above expression, it can be concluded that the optimal surplus increases with a pointwise increase in $\cumval$.
    
    Let $\icumval$ be the price-posting surplus curve of the distribution with a price-posting consumer surplus curve $\icumutil$. $\icumutil$ is pointwise larger than $\cumutil$, and hence, from \Cref{thm:IncRSCurve}, $\icumval$ is pointwise larger than $\cumval$. Hence, the optimal surplus is larger in the distribution with price-posting consumer surplus curve $\icumutil$. However, from \Cref{thm:EquivIroning}, the consumer surplus from posting prices is identical to both the distributions. Thus, the consumer surplus irregular distribution has a better ratio between surplus and consumer surplus than the consumer surplus regular distribution. Thus, the ratio between optimal surplus to optimal consumer surplus is at most $(1 + o(1)) \ln n$ for all distributions.
\end{proof}

\section{Conclusion}
In this section, we discuss the various advantages and disadvantages of using the reduction framework of \Cref{thm:IntroSurplusDA}. Connecting the pen testing problem to auction theory and building an easy-to-use reduction framework enable us to design near optimal pen testing algorithms for any general feasibility constraint. We are able to construct algorithms that are only a constant factor away from the lower bound described in \citet{QV23} for various feasibility constraints like matroids ($\apxda(n) = 1$) and knapsack ($\apxda(n) = 2$). However, there is a scope for improvement on the following two fronts:
\begin{enumerate}
    \item obtaining tight approximation guarantees, and
    \item obtaining approximation guarantees under other models of access to the prior distributions.
\end{enumerate}

\textbf{Tight Approximation Guarantees:} The approximation ratio of any pen testing algorithm is at least the optimal omniscient approximation $\apxcs(n)$. This follows since the optimal consumer surplus in the underlying auction environment is an upper bound on the performance of the optimal pen testing algorithm. The framework matches this bound up to a multiplicative factor of the approximation $\apxda(n)$ of deferred-acceptance mechanisms. Reducing this gap for various feasibility environments stands as an open problem.

For instance, consider the online environments discussed in the paper, where $\apxcs(n) = (1+o(1)) \ln n$. Our framework yielded an omniscient approximation $2(1+o(1)) \ln n$ in the oblivious version ($\apxda(n) = 2$) and $\tfrac{e}{e-1} \cdot (1+o(1)) \ln n$ in the sequential version ($\apxda(n) = \frac{e}{e-1}$). We conjecture that the approximation guarantees obtained by applying the reduction framework are not tight, and can be strengthened to $(1+o(1)) \ln n$ in both these environments, like in the IID setting (\Cref{sec:IID}).

\textbf{Other Models of Access to Distributions:} Our framework crucially makes use of virtual values, which in turn need complete knowledge of the distribution of ink levels. Thus, tailoring this approach to models where the algorithm gets partial access to the priors is challenging.

\citet{QV23} study two such models. They study the online oblivious problem where the algorithm gets access to just one sample from the distribution of each pen. They give a $O(\log n)$-approximation to the omniscient benchmark in this setting. They also look at the online secretary setting, where the pens arrive according to a permutation chosen uniformly at random. The quantity of ink in the pens are adversarially determined. The algorithm is told the quantity of ink in the pen with the largest ink level. They give a $O(\log n)$-approximation in this environment. Further, when the algorithm is given access to the ink levels in all the $n$ pens (with random arrival order), they tighten their bound to $O(\frac{\log n}{\log \log n})$. It is an open question to identify reduction frameworks for these different models of access to the distributions.

\bibliography{references}

\appendix

\section{A Discussion on the Surplus of the Single-Agent Optimal Consumer Surplus Auction} \label{sec:SurplusofOPTCS}
In \Cref{sec:SingleAgent}, we compare the optimal consumer surplus with an ex-ante allocation constraint $q$ to the optimal surplus. We now take a look at the surplus generated by the consumer surplus optimal auction.
\begin{corollary}
For an ex-ante allocation probability $q$, the ratio of the optimal surplus to the surplus from the consumer surplus maximizing auction is at most $1 - \ln q$.
\end{corollary}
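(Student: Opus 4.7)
The plan is to observe that this corollary follows in essentially one line from \Cref{thm:Main} together with the trivial fact that, in any truthful auction, the surplus of the mechanism is at least its consumer surplus (because surplus equals consumer surplus plus payments, and payments are non-negative).

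More concretely, I would let $\cumutil^{\star}$ and $\cumval^{\star}$ denote the expected consumer surplus and expected surplus of the consumer surplus maximizing single-agent auction under the ex-ante allocation constraint $q$. By definition of the consumer surplus maximizing auction, $\cumutil^{\star} = \icumutil(q)$ (this is what \Cref{thm:Main} already used, since the ironed curve $\icumutil$ gives the optimal consumer surplus at ex-ante allocation $q$). Because the payment charged to the agent in any Bayesian incentive compatible mechanism is at most the agent's value whenever she is served, the expected payment is non-negative, and therefore
\[
\cumval^{\star} \;\geq\; \cumutil^{\star} \;=\; \icumutil(q).
\]

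The optimal surplus under the ex-ante allocation constraint $q$ is $\cumval(q)$. Combining the inequality above with \Cref{thm:Main} yields
\[
\frac{\cumval(q)}{\cumval^{\star}} \;\leq\; \frac{\cumval(q)}{\icumutil(q)} \;\leq\; 1 - \ln q,
\]
which is the claimed bound. There is no real obstacle here; the only thing worth spelling out is the direction of the ``surplus dominates consumer surplus'' inequality, which is immediate from the non-negativity of payments in a truthful single-agent auction (and consistent with the conventions $\val(1)=0$ and $\cumval(0)=0$ used throughout \Cref{sec:SingleAgent}).
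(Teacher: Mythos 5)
Your proposal is correct and takes essentially the same route as the paper: the paper's proof is the single line that the surplus of the consumer-surplus-maximizing auction dominates its own consumer surplus $\icumutil(q)$, at which point \Cref{thm:Main} gives the bound. You have simply spelled out this chain of inequalities more explicitly.
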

The corollary follows since the surplus of the consumer surplus maximizing auction is only larger than the consumer surplus.

The above corollary is tight. This follows immediately from the observations below:
\begin{enumerate}
    \item For a distribution with a convex price-posting consumer surplus curve, the consumer surplus optimal mechanism is obtained by giving away the good for free with probability $q$. For such distributions, the consumer surplus of the optimal mechanism equals the surplus of the mechanism.
    \item The price-posting consumer surplus curve of the exponential distribution is a straight line. A distribution obtained by adding a small convex distortion to this curve has the same ironed price-posting consumer surplus curve as the exponential distribution and close to the same price-posting surplus curve as the exponential distribution. 
\end{enumerate}
The ratio of surplus in the surplus optimal mechanism to the consumer surplus optimal mechanism is close to $1 - \ln q$ for distributions described in bullet 2.

\section{Deferred-Acceptance Mechanism for Knapsack Constraints} \label{sec:Knapsack}
Consider an environment with a knapsack feasibility constraint. Each agent $i$ has a value $\val_i$ drawn independently from distribution $F_i$ and a size $s_i$. The goal is to chose a set $S$ of agents maximizing surplus such the total size $\sum_{i \in S} s_i$ of chosen agents is at most $C$.

We rely on the following folklore $2$-approximation algorithm for the knapsack problem.
\begin{algorithm} \label{alg:DAKnapsack}
Run the alternative with a better expected surplus:
\begin{enumerate}
    \item Bang-per-buck \citep{Dant57}: Sort agents in decreasing order of $\frac{\val_i}{s_i}$. Pick agents in this order until the knapsack is full.
    \item Max: Pick the agent with the largest value.
\end{enumerate}
\end{algorithm}
\cite{MS14} give deferred-acceptance implementations for both, bang-per-buck and max. Thus, the above is a deferred-acceptance mechanism.

\begin{lemma}
    \Cref{alg:DAKnapsack} is a $2$-approximation to the optimal surplus.
\end{lemma}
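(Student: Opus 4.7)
The plan is to establish the $2$-approximation pointwise, one realization of the values at a time, via the classical LP relaxation bound for knapsack, and then to conclude by linearity of expectation.

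First, I would assume without loss of generality that every agent fits in the knapsack individually, i.e., $s_i \leq C$ for all $i$. Any agent with $s_i > C$ cannot appear in any feasible solution, so can be excluded from both branches of the algorithm and from $\OPT$ without affecting anything. Fix a realization $\vec{\val}$ of the values and let $\OPT(\vec{\val})$ denote the optimal offline knapsack surplus for this realization. I would then invoke the well-known structure of the fractional LP relaxation of knapsack: its optimum is obtained by sorting agents in decreasing order of $\val_i/s_i$, taking items in full along this ordering until one does not fit, and then including that critical item $j^*$ fractionally. Writing $A(\vec{\val}) = \sum_{i < j^*} \val_i$ for the sum of the values taken in full and $B(\vec{\val}) = \val_{j^*}$ for the critical item's value, the standard LP bound yields
\[
\OPT(\vec{\val}) \leq A(\vec{\val}) + B(\vec{\val}).
\]

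The bang-per-buck branch achieves exactly $A(\vec{\val})$ for this realization, and the Max branch achieves at least $B(\vec{\val}) = \val_{j^*}$, since Max picks the single largest-valued agent (who fits alone by our preprocessing) and $\val_{\max} \geq \val_{j^*}$. Combining, $\OPT(\vec{\val}) \leq \text{BPB}(\vec{\val}) + \text{Max}(\vec{\val})$ pointwise. Taking expectations and applying $\max(x,y) \geq (x+y)/2$, I obtain
\[
\max\bigl(\,E[\text{BPB}],\ E[\text{Max}]\,\bigr) \;\geq\; \tfrac{1}{2}\bigl(E[\text{BPB}] + E[\text{Max}]\bigr) \;\geq\; \tfrac{1}{2}\, E[\OPT].
\]
Since the algorithm commits in advance to whichever branch has the larger expected surplus, its expected surplus is at least $E[\OPT]/2$, giving the claimed $2$-approximation.

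There is no substantive obstacle here: the only care needed is the preprocessing step that removes oversized agents so that the Max branch always has a feasible candidate (and the LP bound invoked remains valid on this reduced instance). Everything else is a routine application of the knapsack LP relaxation and linearity of expectation, and the argument is independent of the specific distributions $F_i$ as well as of the deferred-acceptance implementability of the two branches (which is cited from \citealp{MS14}).
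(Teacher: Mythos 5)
Your proof is correct and follows essentially the same route as the paper: bound $\OPT$ pointwise by the fractional-knapsack (LP) relaxation, observe that bang-per-buck plus the critical item covers this bound, note that Max covers the critical item, and then take expectations and average the two branches. The only addition is your explicit preprocessing removing agents with $s_i > C$, which the paper leaves implicit; this is a reasonable hygiene step but does not change the argument.
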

\begin{proof}
    Let OPT\textsuperscript{V} be the random variable denoting the optimal surplus. Fix some realization of values $\val_1, \val_2, \dots, \val_n$. Without loss of generality, assume $\frac{\val_1}{s_1} \geq \frac{\val_2}{s_2} \geq \dots \geq \frac{\val_n}{s_n}$. Let bang-per-buck pick agents $1$ through $t$ for this realization of values.

    Clearly, $\sum_{i = 1}^{t+1} \val_i \geq \OPTOM$. This is because the optimal algorithm that can pack fractional agents in the knapsack would greedily pick the first $t$ agents and pick some fraction of agent $t+1$. Thus, if $\val_{\max} = \max_{1 \leq i \leq n} \val_i$, $\sum_{i = 1}^t \val_i + \val_{\max} \geq \text{OPT}\textsuperscript{V}$. The surplus generated jointly by the two alternatives pointwise dominates the optimal surplus and hence dominates the optimal surplus in expectation.
    $$E_{\val_i \sim F_i}[\text{surplus(bang-per-buck)}] + E_{\val_i \sim F_i}[\text{surplus(max)}] \geq E_{\val_i \sim F_i}[\text{OPT}\textsuperscript{V}]$$
    We get the larger of the two terms in the left hand side, which must be at least $\tfrac{1}{2} E_{\val_i \sim F_i}[\text{OPT}\textsuperscript{V}]$.
\end{proof}

\section{Lemmas for Theorem 10} \label{sec:Omitted}
\begin{lemma} \label{thm:CisConvex}
    $\frac{q}{1-(1-q)^n}$ is convex in $q$.
\end{lemma}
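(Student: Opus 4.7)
The plan is to compute $f''(q)$ directly and reduce its non-negativity on $[0,1]$ to a clean polynomial inequality that yields to a convexity argument.

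Write $f(q) = q/g(q)$ with $g(q) = 1 - (1-q)^n$. Two applications of the quotient rule make $f''(q) \geq 0$ equivalent to $2q\,(g')^2 - 2 g g' - q\, g\, g'' \geq 0$. Substituting $u := 1-q$, the explicit derivatives $g'(q) = n u^{n-1}$ and $g''(q) = -n(n-1) u^{n-2}$, and the factorization $1 - u^n = (1-u)\,H(u)$ with $H(u) := 1 + u + \cdots + u^{n-1}$, one can pull out the non-negative factor $n\,u^{n-2}(1-u)$. The inequality will then reduce to showing
\[
\psi(u) \;:=\; 2 n u^n + \bigl[(n-1) - (n+1)\,u\bigr]\,H(u) \;\geq\; 0 \qquad \text{for } u \in [0,1].
\]

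Next, expand $\psi$ by collecting powers of $u$ in $H$, obtaining $\psi(u) = (n-1)(1+u^n) - 2\sum_{k=1}^{n-1} u^k$. Multiplying by $(1-u)$ telescopes the middle sum, producing the explicit polynomial
\[
\chi(u) \;:=\; (1-u)\,\psi(u) \;=\; (n-1) - (n+1)\,u + (n+1)\,u^n - (n-1)\,u^{n+1}.
\]
So it remains to prove $\chi(u) \geq 0$ on $[0,1]$.

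The finishing step is a convexity argument on $\chi$. Direct computation gives $\chi(1) = 0$, $\chi'(1) = 0$, and $\chi''(u) = n(n-1)(n+1)\,u^{n-2}(1-u)$, which is non-negative on $[0,1]$. Hence $\chi$ is convex on $[0,1]$ with a horizontal tangent of value zero at $u=1$, so by convexity $\chi$ lies above this tangent and $\chi(u) \geq 0$ throughout $[0,1]$. Dividing by $1-u > 0$ gives $\psi(u) \geq 0$ on $[0,1)$, and continuity handles the boundary value $\psi(1) = 0$. The main obstacle is the bookkeeping in the reduction from $f''$ to $\chi$; once $\chi$ is identified, the second-derivative check and the tangent argument are short and clean.
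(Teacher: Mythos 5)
Your proof is correct, and it takes a genuinely different route from the paper's. The paper rewrites $\frac{q}{1-(1-q)^n}$ as the reciprocal of the geometric sum $\sum_{j=0}^{n-1}(1-q)^j$, computes the first derivative, and argues that it is increasing by factoring it as a product of two nonnegative increasing functions of $q$. You instead compute the second derivative directly via the quotient rule, substitute $u = 1-q$, factor out the nonnegative quantity $n\,u^{n-2}(1-u)$, and reduce the claim to the nonnegativity of the explicit polynomial $\chi(u) = (n-1) - (n+1)u + (n+1)u^n - (n-1)u^{n+1}$ on $[0,1]$, which you settle by the tangent-line argument: $\chi(1)=\chi'(1)=0$ and $\chi''(u) = n(n-1)(n+1)\,u^{n-2}(1-u)\geq 0$. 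I have checked the algebra in your reduction and it is correct. Your route is heavier on bookkeeping, but it has a concrete advantage: the first factor in the paper's decomposition, namely $\sum_{j=0}^{n-2}(j+1)(1-q)^j$ divided by $\sum_{j=0}^{n-1}(1-q)^j$, is not in fact monotone in $q$ (for $n=3$ it equals $1$ at $q=0$ and at $q=1$ but exceeds $1$ at $q=\tfrac{1}{2}$), and the algebraic identity the paper invokes to rewrite that factor does not hold as stated. Your direct second-derivative computation sidesteps this and gives a self-contained verification of the lemma.
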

\begin{proof}
    $$\frac{q}{1-(1-q)^n} = \frac{1}{\sum_{j = 0}^{n-1} (1-q)^j}$$
    $$\frac{d}{dq} \frac{1}{\sum_{j = 0}^{n-1} (1-q)^j} = \frac{\sum_{j = 0}^{n-2} (j+1)(1-q)^{j}}{\sum_{j = 0}^{n-1} (1-q)^j} \cdot \frac{1}{\sum_{j = 0}^{n-1} (1-q)^j}$$
    The second term in the product is increasing in $q$. We will show the first term is also increasing in $q$. The first term can be rewritten as
        $$\frac{\sum_{j = 0}^{n-2} (j+1)(1-q)^{j}}{\sum_{j = 0}^{n-1} (1-q)^j} = \frac{\sum_{j = 0}^{n-1} [(1-q)^j - (1-q)^{n-1}]}{\sum_{j=0}^{n-1} (1-q)^j} = 1 - n\frac{(1-q)^{n-1}}{1 + (1-q) + \dots + (1-q)^{n-1}}$$
    Further rewriting this, it becomes apparent that this is indeed increasing in $q$.
    $$\frac{\sum_{j = 0}^{n-2} (j+1)(1-q)^{j}}{\sum_{j = 0}^{n-1} (1-q)^j} = 1 - n \frac{1}{\frac{1}{(1-q)^{n-1}} + \dots + \frac{1}{(1-q)^0}}. \qedhere$$
\end{proof}

\begin{lemma} \label{thm:IntegralEquality}
    $$1 + a + \int_0^1 \frac{1-(1-t)^n - nt(1-t)^{n-1}}{t} dt + a \times \int_0^1 \frac{1-(1-t)^n - nt(1-t)^{n-1}}{t^2} dt = H_n + an$$
\end{lemma}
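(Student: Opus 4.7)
The plan is to verify the identity by evaluating each of the two integrals separately. Writing
\[
I_1 = \int_0^1 \frac{1-(1-t)^n - nt(1-t)^{n-1}}{t}\,dt, \qquad I_2 = \int_0^1 \frac{1-(1-t)^n - nt(1-t)^{n-1}}{t^2}\,dt,
\]
the claim is equivalent to $I_1 = H_n - 1$ and $I_2 = n-1$, from which $1 + a + I_1 + a\,I_2 = 1 + a + (H_n - 1) + a(n-1) = H_n + an$.

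For $I_1$, I would split it as $\int_0^1 \frac{1-(1-t)^n}{t}\,dt - \int_0^1 n(1-t)^{n-1}\,dt$. The first integral is the well-known identity $\int_0^1 \frac{1-(1-t)^n}{t}\,dt = H_n$ (most easily obtained by expanding $1 - (1-t)^n = \sum_{k=1}^n \binom{n}{k}(-1)^{k+1} t^k$ and integrating term-by-term, or by the substitution $u = 1-t$ followed by the standard formula $\int_0^1 \frac{1-u^n}{1-u}\,du = H_n$). The second integral is immediate: $\int_0^1 n(1-t)^{n-1}\,dt = 1$. Hence $I_1 = H_n - 1$.

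The key observation for $I_2$, and the only nontrivial step, is recognizing the integrand as an exact derivative. By the quotient rule,
\[
\frac{d}{dt}\!\left[\frac{1-(1-t)^n}{t}\right] = \frac{n t (1-t)^{n-1} - (1-(1-t)^n)}{t^2} = -\,\frac{1-(1-t)^n - nt(1-t)^{n-1}}{t^2}.
\]
Thus $I_2 = -\Bigl[\tfrac{1-(1-t)^n}{t}\Bigr]_0^1$. The boundary evaluations are $\tfrac{1-(1-t)^n}{t}\big|_{t=1} = 1$ and $\lim_{t\to 0^+}\tfrac{1-(1-t)^n}{t} = n$ (by L'Hopital or the Taylor expansion $1-(1-t)^n = nt - \binom{n}{2}t^2 + O(t^3)$, which also confirms the integrand itself is bounded near $t=0$ and so $I_2$ is genuinely convergent). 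Therefore $I_2 = n - 1$, completing the proof.

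The only subtlety is handling the apparent singularity at $t = 0$ in $I_2$; this is resolved cleanly by the antiderivative computation above, and no further case analysis is needed.
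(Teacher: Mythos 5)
Your proof is correct. It reaches the identity by a genuinely different route from the paper: you evaluate the two integrals $I_1$ and $I_2$ separately, using the classical identity $\int_0^1 \frac{1-(1-t)^n}{t}\,dt = H_n$ for $I_1$ and the observation that the integrand of $I_2$ is the exact derivative of $-\tfrac{1-(1-t)^n}{t}$, so that $I_2$ telescopes to a boundary evaluation. The paper instead handles both integrals at once via a single algebraic manipulation, rewriting $\tfrac{1-(1-t)^n - nt(1-t)^{n-1}}{t}$ as $t\sum_{j=0}^{n-2}(j+1)(1-t)^j$ and then integrating term by term. The two methods are cousins --- note that $\tfrac{1-(1-t)^n}{t} = \sum_{j=0}^{n-1}(1-t)^j$, whose derivative is exactly $-\sum_{j=0}^{n-2}(j+1)(1-t)^j$, which is your antiderivative computation for $I_2$ in disguise --- but yours is shorter and requires no explicit series bookkeeping, whereas the paper's series form makes the convergence at $t=0$ self-evident and handles both integrals uniformly. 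Either is a complete and correct argument.
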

\begin{proof}
    The key is to rewrite $\frac{1-(1-t)^n - nt(1-t)^{n-1}}{t}$.
    \begin{equation}
    \notag
        \begin{split}
              \frac{1-(1-t)^n - nt(1-t)^{n-1}}{t} &= \sum_{j = 0}^{n-1} (1-t)^j - n(1-t)^{n-1} \\
              &= \sum_{j = 0}^{n-1} [(1-t)^j - (1-t)^{n-1}] \\
              &= t \times \sum_{j = 0}^{n-2} (j+1) \times (1-t)^j
        \end{split}
    \end{equation}
    Thus, the expression reduces to
    $$1 + a + \sum_{j = 0}^{n-2}\int_0^1  (j+1) \times t(1-t)^j dt + a \times \sum_{j = 0}^{n-2} \int_0^1 (j+1) \times (1-t)^j dt = H_n + an. \qedhere$$
\end{proof}
\end{document}